\journal{Science of Computer Programming}
\newcommand{\ans}[1]{}
\newcommand{\vdashInd}{\vdash_{\it NDF}}
\newcommand{\bottom}{\mathop{\perp}}
\newcommand{\ExpS}{\mathsf{ExpStruct}}
\newcommand{\Fun}{\mathsf{Functor}}
\newcommand{\Fixpv}{\mathsf{FixpVar}}
\newcommand{\Alphb}{\mathsf{Alph}}
\newcommand{\Slt}{\mathsf{Slt}}
\newcommand{\AlphbN}{\mathsf{AlphName}}
\newcommand{\SltN}{\mathsf{SltName}}
\newcommand{\Ingred}{\mathsf{Ingredient}}
\newcommand{\IdtrlG}{\textnormal{$\id\triangleleft\mathscr{G}$}}
\newcommand{\BtrlG}{\textnormal{$\Bl\triangleleft\mathscr{G}$}}
\newcommand{\FtrlG}{\textnormal{$\mathscr{F}\triangleleft\mathscr{G}$}}
\newcommand{\GtrlG}{\textnormal{$\mathscr{G}\triangleleft\mathscr{G}$}}
\newcommand{\efr}[1]{{\fbox{$#1$}}}  
\newcommand{\RG}{\textnormal{${\mathcal R}$}}
\newcommand{\RId}{\textnormal{${\mathcal R_{\it id}}$}}
\newcommand{\R}{\textnormal{${\mathcal R}$}}
\newcommand{\G}{\textnormal{${\mathcal G}$}}
\newcommand{\algspec}{\textnormal{${\mathcal E}_{\itG}$}}
\newcommand{\algspecN}{\textnormal{${\mathcal E}$}}
\newcommand{\behspec}{\textnormal{${\mathcal B}_{\itG}$}}
\newcommand{\ct}{\colon}
\newcommand{\cl}[1]{\it{cl}(#1)}
\newcommand{\CIRC}{\textsf{CIRC}}
\newcommand{\code}[1]{{\tt \fontsize{9}{10}\selectfont {#1}}}
\newcommand{\rTrans}[1]{\mathrel{{\overset{#1}{\Rightarrow}}}}
\newcommand{\eps}{\varepsilon}
\newcommand{\itB}{\mathsf{B}}
\newcommand{\itG}{\mathscr{G}}
\newcommand{\itF}{\mathscr{F}}
\newcommand{\itExp}{\mathsf{Exp}}
\DeclareSymbolFont{lasy}{U}{lasy}{m}{n}
\DeclareMathSymbol\myDiamond{\mathord}{lasy}{"33}
\newcommand{\myplus}{\mathbin{\rlap{$\myDiamond$}\hspace*{.01cm}\raisebox{.14ex}{$+$}}}
\newcommand\Gf{\mathscr G} 
\newcommand\F{\mathscr F}  
\newcommand\Bl{{\mathsf B}} 
\newcommand\id{{\mathsf{Id}}}
\newcommand\pow{\mathscr{P}_{\!\!\omega}}
\newcommand\pf{{\it NDF}}
\newcommand\Exp{{\mathsf{Exp}}}
\newcommand\E\varepsilon
\newcommand\emp{\underline\emptyset}
\newcommand\D{\mathscr{D}}
\newcommand\Rf{\mathscr{S}}
\newcommand\M{\mathscr{M}}
\newcommand\Pa{\mathscr{Q}}
\newcommand\N{\mathscr{N}}
\newcommand\Lf{\mathscr{L}}
\newcommand{\rules}[2]{\mbox{$\frac%
                     {\mbox{\small \rule[-5pt]{0pt}{14pt} $#1$}}
                     {\mbox{\small \rule[0pt]{0pt}{10pt}$#2$}}$}}
\mathchardef\ls="213C    
\mathchardef\gr="213E    
\newenvironment{todo}{\bigskip\hrule\medskip\noindent}{\medskip\hrule\bigskip}
\newtheorem{theorem}{Theorem}
\newtheorem{lemma}{Lemma}
\newdefinition{definition}{Definition}
\newproof{proof}{Proof}
\newtheorem{example}{Example}
\newtheorem{remark}{Remark}
\newtheorem{corollary}{Corollary}
\newcounter{counter1}
\newcounter{counter2}
\newcounter{counter3}
\begin{document}


\bibliographystyle{abbrv}

\begin{frontmatter}

\title{Automatic Equivalence Proofs for Non-deterministic Coalgebras}

\author[i1,i4]{Marcello~Bonsangue}
\ead{marcello@liacs.nl}

\author[i2]{Georgiana~Caltais}
\ead{gcaltais10@ru.is}

\author[i2]{Eugen-Ioan~Goriac}
\ead{egoriac10@ru.is}

\author[i3]{Dorel~Lucanu}
\ead{dlucanu@info.uaic.ro}

\author[i4,i5]{Jan~Rutten}
\ead{janr@cwi.nl}

\author[i5,i4,adr3]{Alexandra~Silva}
\ead{alexandra@cs.ru.nl}

\address[i1]{LIACS - Leiden University, The Netherlands}
\address[i2]{School of Computer Science - Reykjavik University, Iceland}
\address[i3]{Faculty of Computer Science - Alexandru Ioan Cuza University, Romania}
\address[i4]{Centrum Wiskunde \& Informatica, The Netherlands}
\address[i5]{Radboud University Nijmegen, The Netherlands}
\address[adr3]{HASLab / INESC TEC, Universidade do Minho, Braga, Portugal}

\begin{abstract}
A notion of generalized regular expressions for a large class of systems modeled as coalgebras, and an analogue of Kleene's theorem and Kleene algebra, were recently proposed by a subset of the authors of this paper. Examples of the systems covered include infinite streams, deterministic automata, Mealy machines and labelled transition systems. In this paper, we present a novel algorithm to decide whether two expressions are bisimilar or not. The procedure is implemented in the automatic theorem prover CIRC, by reducing coinduction to an entailment relation between an algebraic specification and an appropriate set of equations. We illustrate the generality of the tool with three examples: infinite streams of real numbers, Mealy machines and labelled transition systems.
\end{abstract}

\end{frontmatter}

\section{Introduction}
\label{sec:introd}

Regular expressions and \ans{C.2.1} finite deterministic automata (DFA's) constitute two of the most basic structures in computer science. Kleene's theorem~\cite{Kleene61} gives a fundamental correspondence between these two structures: each regular expression denotes a
language that can be recognized by a DFA and, conversely, the language accepted by a DFA
can be specified by a regular expression. Languages denoted by regular expressions are
called regular. Two regular expressions are (language) equivalent if they denote the
same regular language. Salomaa~\cite{salomaa} presented a sound and complete axiomatization
(later refined by Kozen in~\cite{kozen91,kozen-nerode}) for proving the equivalence of regular expressions.

The above programme was applied by Milner in~\cite{milner} to
process behaviours and labelled transition systems (LTS's). Milner
introduced a set of expressions for finite LTS's and proved an
analogue of Kleene's Theorem: each expression denotes the behaviour
of a finite LTS and, conversely, the behaviour of a finite LTS can
be specified by an expression (modulo bisimilarity).
Milner also provided an axiomatization for his expressions, with the
property that two expressions are provably equivalent if and only if
they are bisimilar.

Coalgebras arose in the last decade as a suitable mathematical framework to study state-based systems, such 
 as DFA's and LTS's. For a functor $\Gf \colon \mathbf{Set} \to \mathbf{Set}$, a $\Gf$-coalgebra or $\Gf$-system is a pair
$(S , g )$, consisting of a set $S$ of states and a function $g \colon S\to \Gf(S)$ defining the ``transitions''
of the states. We call the functor $\Gf$ the type of the system. For instance, DFA's can be readily
seen to correspond to coalgebras of the functor $\Gf(S) = 2 \times
S^A$ and image-finite LTS's are obtained by $\Gf(S) = \pow (S)^A$,
where $\pow$ is finite powerset.

For coalgebras of a large class of functors, a language of regular
expressions, a corresponding generalization of Kleene's theorem, and a
sound and complete axiomatization for the associated notion of
behavioral equivalence were  introduced in~\cite{brs_lmcs}.
Both the language of expressions and their axiomatization were derived, in a modular fashion, from the functor defining the type of the system.

Algebra and related tools can be successfully used for reasoning on properties of systems. In this paper, we present a novel method for checking for bisimilarity of generalized regular expressions using the coinductive theorem prover {\CIRC} \cite{goguen-lin-rosu-2000-ase,rosu-lucanu-2009-calco}. The main novelty of the method lies on the generality of the systems it can handle.
{\CIRC} is a metalanguage application implemented in Maude \cite{DBLP:conf/maude/2007}, and its target is to prove properties over infinite data structures. It has been successfully used for checking the equivalence of programs, and trace equivalence and strong bisimilarity of processes.
The tool may be tested online and downloaded from \url{https://fmse.info.uaic.ro/tools/Circ/}.

Determining whether two expressions are equivalent is important in order to be able to compare behavioral specifications.  
In the presence of a sound and complete axiomatization one can determine equivalence using algebraic reasoning. 
A coalgebraic perspective on regular expressions has however provided a more operational/algorithmic way of checking equivalence: one constructs a bisimulation relation containing both expressions. The advantage of the bisimulation approach is that it enables automation since the steps of the construction are fairly mechanic and require almost no ingenuity.
  
We remark that in theory it has been shown that both problems are in PSPACE~\cite{K08a,worthington}, but in practice bisimulation checking tends to be easier. We illustrate this with an example, to give the reader the feeling of the more algorithmic nature of bisimulation. We want to stress however that we are not underestimating the value of an algebraic treatment of regular expressions: on the contrary, as we will show later, the axiomatization plays an important role in guaranteeing termination of the bisimulation construction and is therefore crucial for the main result of this article.  

We show below a proof of the sliding rule: $a(ba)^* \equiv (ab)^*a$. The algebraic proof, using the rules and equations of Kleene algebra, needs to show the two containments
\[
{a(ba)^* \leq (ab)^*a} \qquad \text{ and } \qquad (ab)^*a \leq a(ba)^*
\]
and it requires some ingenuity in the choice of the equation applied in each step. We show the proof for the first inequality, the other would follow a similar proof pattern.
\[
\begin{array}{@{}lcl@{\qquad}l}
&&a(ba)^* \leq (ab)^*a \\
&\Leftarrow& a + (ab)^*a (ba) \leq (ab)^*a &\text{right-star rule}\\
&\iff & (1+ (ab)^*ab)a \leq (ab)^*a &\text{associativity and distributivity}\\
&\iff & (ab)^*a \leq (ab)^*a &\text{right expansion rule: $1+r^*r = r^*$}
\end{array}
\]

For the coalgebraic proof, we build incrementally, and rather mechanically, a bisimulation relation containing the pair  $(a(ba)^*, (ab)^*a)$.  We start with the pair we want to prove equivalent and then we close the relation  with respect to syntactic language derivatives, also known as  {\em  Brzozowski derivatives}. In the current example, the bisimulation relation would contain three pairs: 
\[
R = \{ (a(ba)^*, (ab)^*a), ((ba)^*, b(ab)^*a+1), (0,0)\}
\]
where $1$ and $0$ are, respectively, the regular expressions denoting the empty word and the empty language. In constructing this relation, no decisions were made, and hence the suitability of bisimulation construction as an automatic technique to prove equivalence of regular expressions. 

The main contributions of this paper can be summarized as follows. We present a decision procedure to determine equivalence of generalized regular expressions, which specify behaviours of many types of transition systems, including Mealy machines, labelled transition systems and infinite streams. The valid expressions for each system are type-checked automatically in the tool. We illustrate the decision procedure we devised by applying it to several examples. As a vehicle of implementation, we choose \CIRC, a coinductive theorem prover which has already been explored for the construction of bisimulations. To ease the implementation in \CIRC,  we present the algebraic specifications' counterpart of the coalgebraic framework of the generalized regular expressions mentioned above.
This enables us to automatically derive algebraic specifications that model the language of expressions, and to define an appropriate equational entailment relation which mimics our decision procedure for checking behavioural equivalence of expressions. The implementation of both the algebraic specification and the entailment relation in {\CIRC} allows for automatic reasoning on the equivalence of expressions.

The present paper is an extended version of the conference paper~\cite{sbmf}. In comparison with the aforementioned paper we have extended the tool to deal with non-deterministic systems. More precisely, we have included the powerset function in the class of functors considered. Moreover, we have included all the proofs, more examples and additional explanations on the theory behind and implementation of the tool.

\vspace{1ex}
\noindent
\textit{Organization of the paper.} Section~\ref{sec:prelim} recalls the basic definitions of the language associated to a non-deterministic functor. Section~\ref{sec:dp} describes the decision procedure to check equivalence of regular expressions. Section~\ref{sec:algSpec} formulates the aforementioned language as an algebraic specification, which paves the way to implement in {\CIRC} the procedure to decide equivalence of expressions. The implementation of the decision procedure and its soundness are described in Section~\ref{sec:dec-proced}.
In Section~\ref{sec:caseStudy} we show, by means of several examples,
how one can check bisimilarity, using {\CIRC}.
Section~\ref{sec:concl} contains concluding remarks and pointers for future work.

\section{Regular Expressions for Non-deterministic Coalgebras}
\label{sec:prelim}

In this section, we briefly recall the basic definitions in
\cite{brs_lmcs}.

Let \textbf{Set} denote the category of sets (represented by capital letters $X, Y, \ldots$)
and functions (represented by lower case letters $f, g, \ldots$).
We write $Y^X$ for the family of functions from $X$ to $Y$ and $\pow(X)$ for the collection of finite subsets of a set $X$.
The product of two sets $X, Y$ is written as $X \times Y$
and has the projections functions $\pi_1$ and $\pi_2$:
$X
\mathrel{{\overset{\pi_1}{\longleftarrow}}}
X\times Y
\mathrel{{\overset{\pi_2}{\longrightarrow}}}
Y$.
We define
\(
X\myplus Y = X\uplus Y \uplus \{\bot,\top\}
\)
where $\uplus$ is the disjoint
union of sets, with injections
$
X \mathrel{{\overset{\kappa_1}{\longrightarrow}}} X
\uplus
Y \mathrel{{\overset{\kappa_2}{\longleftarrow}}} Y$.
Note that the set $X\myplus Y$ is different from the classical
coproduct of $X$ and $Y$ (which we shall denote by $X+Y$), because of the
two extra elements $\bot$ and
$\top$.
These extra elements are used to represent, respectively,
underspecification and inconsistency in the specification of some
systems.

For each of the operations defined above on sets, there are analogous
ones on functions. Let $f\colon X\to Y$, $f_1\colon X\to Y$ and $f_2\colon Z\to W$. We
define the following operations:
\[
\begin{array}{l@{\hspace{1.0cm}}l}
f_1\times f_2 \colon X\times Z \to Y\times W & f_1\myplus f_2 \colon X\myplus
Z \to Y\myplus W\\[.6ex]
(f_1\times f_2)(x,z) = <f_1(x),f_2(z)> &
(f_1\myplus f_2)(c) = c,\ c\in\{\bot,\top\}\\[.6ex]
 &
(f_1\myplus f_2)(\kappa_i(x)) = \kappa_i(f_i(x)),\ i\in\overline{1,2}\\[.6ex]
f^A \colon X^A \to Y^A  & \pow(f) \colon \pow(X) \to \pow(Y)\\
f^A (g) = f\circ g & \pow(f)(X_1) = \{y \in Y \mid f(x) = y, x\in X_1\}
\end{array}
\]

\begin{remark}
\ans{C.1.16}
For the sake of brevity, we use the notation $i \in \overline{1,n}$ as a shorthand for $i \in \{1, \ldots, n\}$.
\end{remark}

Note that in the definition above we are using the same symbols that we defined above for
the
operations on sets. It will always be clear from the context which
operation is being used.

In our definition of non-deterministic functors we will use constant
sets
equipped with an information order. In particular, we will use
join-semilattices. A (bounded) join-semilattice is a set $\Bl$
equipped
with a binary operation $\vee_\Bl$ and a constant $\bot_\Bl \in \Bl$, such
that $\vee_\Bl$ is commutative, associative and idempotent. The
element $\bot_\Bl$ is neutral with respect to $\vee_\Bl$. As usual, $\vee_\Bl$
gives rise to a partial ordering $\leq_\Bl$ on the elements of $\Bl$: $
b_1 \leq_\Bl b_2 \Leftrightarrow b_1\vee_\Bl b_2 = b_2$.
Every set $S$
can be mapped into a join-semilattice by taking $\Bl$ to be
the
set of all finite subsets of $S$ with empty set as $\bot_{\Bl}$, and union as join.

\paragraph{\bf Coalgebras} A coalgebra is a pair $(S, g\colon
S\to \Gf(S))$, where $S$
is a set of states and $\Gf\colon \mathbf{Set}\to \mathbf{Set}$ is a functor.
The functor $\Gf$, together with the
function $g$, determines the {\em transition structure} (or
dynamics) of the $\Gf$-coalgebra~\cite{Rutten00}.

\ans{C.1.2} A coalgebra $(S,g)$ is \emph{finite} if $S$ is a finite set.

\begin{definition}[Bisimulation]\label{def:bis}
Let $(S, f)$ and $(T,g)$ be two $\Gf$-coalgebras. We call a relation
$R \subseteq S\times T$ a {\em bisimulation\/}~\cite{DBLP:journals/iandc/HermidaJ98}
iff
\[
(s,t)\in R \Rightarrow (f(s), g(t))\in \overline \Gf(R)
\]
where $\overline \Gf(R)$ is defined as
$
\overline \Gf(R) = \{ (\Gf(\pi_1)(x),\Gf(\pi_2)(x)) \mid x \in \Gf(R) \}
$.
\end{definition}

\ans{C.1.3} We write $s\sim_\Gf t$ whenever there exists a bisimulation relation
containing $(s,t)$ and we call $\sim_\Gf$ the bisimilarity relation. It is of interest to remark that the relation $\sim_\Gf$ is an equivalence relation.
We shall drop the subscript $\Gf$ whenever the functor $\Gf$ is clear
from the context. In the literature, one finds different definitions of bisimulation or behavioral equivalence~\cite{staton}. For the class of functors we consider here the different notions coincide and therefore we will not discuss them. 

\paragraph{\bf Non-deterministic functors} They are functors $\Gf \colon {\bf{Set}} \rightarrow {\bf{Set}}$
built inductively from the identity,
and constants, using $\times$,
$\myplus$, $(-)^A$ and $\pow$:

\begin{equation}\label{eq:fun-gram}
\pf\ni \Gf \,::\!=\, \id \mid \Bl \mid \Gf\myplus \Gf \mid \Gf\times
\Gf \mid \Gf^A \mid \pow \Gf
\end{equation}
where $\Bl$ is a finite join-semilattice and $A$ is a finite set.
Typical examples of non-deterministic functors include $\Rf=\Bl\times \id$, $\M =
(\Bl\times \id)^A$,
$\D = 2 \times \id^A$, $\Pa = (1 \myplus \id)^A$,  $\N = 2 \times \pow(\id)^A$ and $\Lf = 1 \myplus \pow(\id)^A$. These
functors represent, respectively, the type of streams, Mealy,
deterministic, partial deterministic automata, non-deterministic automata and labeled transition systems with explicit termination. $\Rf$-bisimulation is stream equality, whereas $\D$-bisimulation coincides with language equivalence.

\begin{remark}
\ans{C.3.7}
As stated in~\cite{brs_lmcs}, the use of join-semilattices for constant functors and the sum $\myplus$ instead of the ordinary product enabled the use of underspecification and inconsistency ({\it i.e.}, $\top$ and $\bottom$, respectively) in the specification of systems, and moreover, has allowed the whole framework to be studied in the category {\bf Set}. Even though underspecification and inconsistency can be captured by a semilattice structure, and the axiomatization provides the set of expressions with a join-semilattice structure (therefore allowing the work directly in the category of join-semilattices), remaining in the category {\bf Set} was chosen for simplicity.
\end{remark}

Next, we give the definition of  the ingredient relation, which
relates a non-deterministic functor $\Gf$ with its {\em ingredients}, {\em
i.e.}, the functors used in its inductive construction. We shall use
this relation later for typing our expressions.

\begin{definition}\label{def:ingred}

Let $\lhd\subseteq \pf\times \pf$ be the least reflexive and transitive relation on
non-deterministic functors
such that
\[
\Gf_1\triangleleft \Gf_1\times \Gf_2,\ \ \ 
\Gf_2\lhd \Gf_1\times \Gf_2,\ \ \ 
\Gf_1\lhd \Gf_1\myplus \Gf_2,\ \ \ 
\Gf_2\lhd \Gf_1\myplus \Gf_2,\ \ \ 
\Gf\lhd \Gf^A, \ \ \ \Gf \lhd \pow\Gf.
\]
\end{definition}
Here and throughout this document we use $\F \lhd \Gf$ as a shorthand
for $(\F,\Gf)\in \lhd$. If $\F \lhd \Gf$, then $\F$ is said to  be an
\emph{ingredient} of $\Gf$. For example, $2$, $\id$, $\id^A$ and $\D$
itself are all the ingredients of the deterministic automata functor
$\D$.

\paragraph{\bf  A language of regular expressions for non-deterministic coalgebras} We now associate a language of expressions $\Exp_\Gf$ with each non-deterministic functor $\Gf$.

\begin{definition}[Expressions]\label{def:expr}
Let $A$ be a finite set, $\Bl$ a finite join-semilattice and $X$ a set
of fixed-point variables. The set $\Exp$ of all {\em expressions\/} is given
by the following grammar, where $a\in A$, $b\in \Bl$ and $x\in X$:
\begin{equation}\label{eq:grammar}
\begin{array}{lcl}
\E &::\!=&  x \mid \E \oplus \E \mid \gamma\\
\end{array}
\end{equation}
where $\gamma$ is a {\em guarded expression} given by:
\begin{equation}\label{eq:grammar2}
\begin{array}{lcl}
\gamma &::\!=& \emp \mid \gamma \oplus \gamma \mid \mu
x.\gamma
   \mid b \mid l<\E> \mid r<\E> \mid l[\E] \mid r[\E] \mid a(\E) \mid \{\E\} \\
\end{array}
\end{equation}
\end{definition}
In the expression $\mu x.\gamma$, $\mu$ is a binder for
all the free occurrences of $x$ in $\gamma$. Variables that are not bound are
free. A {\em closed expression} is an expression
without free occurrences of fixed-point variables $x$. We denote the set of closed
expressions by $\Exp^c$.

The language of expressions for non-deterministic coalgebras
is a generalization of the classical notion of regular expressions:
$\emp$, $\eps_1 \oplus \eps_2$ and $\mu x.\gamma$
play similar roles to the regular expressions denoting empty language, the union of languages and the
Kleene star.
\ans{C.1.4} Moreover, note that, not unexpectedly, in~\cite{brs_lmcs}, $\oplus$ was axiomatized as an associative, commutative and idempotent operator, with $\emp$ as a neutral element. 
The expressions $l\langle\eps\rangle$, $r\langle\eps\rangle$,
$l[\eps]$, $r[\eps]$, $a(\eps)$ and $\{\E \}$ specify the left and right hand-side
of products and sums, function application and singleton sets, respectively.
Next, we present a type assignment system for associating
expressions to non-deterministic functors. This will allow us to associate with each functor
$\Gf$ the expressions $\E\in \Exp^c$ that are valid specifications of
$\Gf$-coalgebras.

\begin{definition}[Type system]\label{def:ts}
We now define a typing relation $\mathbin\vdash\subseteq \Exp \times \pf
\times \pf $ that will associate an expression $\E$ with two non-deterministic functors $\F$ and $\Gf$, which are related by the
ingredient relation ($\F$ is an ingredient of $\Gf$). We shall write
$\vdash \E\colon \F\lhd \Gf$ for $(\E,\F,\Gf) \in \;\vdash$.  The rules that define $\vdash$ are the following:
\[
\renewcommand{\arraystretch}{0.5}
\begin{array}{@{}ccc@{}}
\rules{}{\vdash \emp \colon \F\lhd \Gf }&
\rules{}{\vdash b\colon \Bl\lhd \Gf}\,\, (b \in \Bl)&
\rules{}{\vdash x \colon \Gf\lhd \Gf }\,\, (x \in X)\\\\
\rules{\vdash\E\colon \Gf\lhd \Gf}
     {\vdash \mu x.\E \colon \Gf\lhd \Gf}&
\rules{\vdash \E_1 \colon \F\lhd \Gf\;\;\;\; \vdash\E_2\colon \F\lhd \Gf}{\vdash \E_1\oplus\E_2 \colon \F\lhd \Gf} &
\rules{\vdash \E \colon \Gf\lhd \Gf}
     {\vdash \E \colon \id\lhd \Gf}\\\\
     \rules{\vdash \E\colon \F_2\lhd \Gf}
     {\vdash r[\E] \colon \F_1\myplus \F_2\lhd \Gf}
     &
\rules{\vdash \E\colon \F\lhd \Gf}
     {\vdash a(\E) \colon \F^A\lhd \Gf}\,\, (a \in A)
&
\rules{\vdash \E\colon \F_1\lhd \Gf}
     {\vdash l<\E> \colon \F_1\times \F_2\lhd \Gf}\\\\
\rules{\vdash \E\colon \F_2\lhd \Gf}
     {\vdash r<\E> \colon \F_1\times \F_2\lhd \Gf}&
\rules{\vdash \E\colon \F_1\lhd \Gf}
     {\vdash l[\E] \colon \F_1\myplus \F_2\lhd \Gf} & \rules{\vdash \E\colon \F_1\lhd \Gf}{\vdash \{\E\} \colon \pow \F_1 \triangleleft \Gf
    }
\end{array}
\]
\end{definition}
We can now formally define the set of $\Gf$-expressions: well-typed
expressions associated with a non-deterministic functor $\Gf$.

\begin{definition}[$\Gf$-expressions]\label{def:g-expr}
Let $\Gf$ be a non-deterministic functor and $\F$ an ingredient of $\Gf$.
We define $\Exp_{\F\lhd \Gf}$ by:
\[
\Exp_{\F\lhd \Gf} = \{\E \in \Exp^c \mid\ \vdash\E \colon \F\lhd \Gf\}\,.
\]
We define the set $\Exp_\Gf$ of well-typed {\em
$\Gf$-expressions\/} by $\Exp_{\Gf\lhd \Gf}$.
\end{definition}

In \cite{brs_lmcs}, it was proved that
the set of $\Gf$-expressions for a given non-deterministic functor $\Gf$
has a coalgebraic structure:
\[
\delta_{\Gf} \colon {\Exp}_{\Gf} \to {\Gf}({\Exp}_{\Gf})
\]
More precisely, in \cite{brs_lmcs}, which we
refer to for the complete definition of $\delta_{\Gf}$, the authors defined a function
$
\delta_{\F \lhd \Gf} \colon {\Exp}_{\F\lhd \Gf} \to {\F}({\Exp}_{\Gf})
$
and then set $\delta_\Gf = \delta_{\Gf\lhd \Gf}$.

The coalgebraic structure on the set of expressions enabled the proof of a Kleene like theorem.
\begin{theorem}[Kleene's theorem for non-deterministic coalgebras]\label{thm:kleene}
Let $\Gf$ be a non-deterministic functor.
\begin{enumerate}
\item For any $\E\in \Exp_\Gf$, there exists a finite $\Gf$-coalgebra $(S,g)$  and $s\in S$ such that $\E\sim s$.
\item For every \ans{C.1.6, C.2.3} finite $\Gf$-coalgebra $(S,g)$  and $s\in S$ there exists an expression $\E_s\in \Exp_\Gf$ such that $\E_s\sim s$.
\end{enumerate}
\end{theorem}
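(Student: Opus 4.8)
The plan is to prove the two statements by the standard ``Kleene theorem'' technique adapted to the coalgebraic setting, using throughout the coalgebra structure $\delta_\Gf \colon \Exp_\Gf \to \Gf(\Exp_\Gf)$ recalled above (together with the auxiliary maps $\delta_{\F\lhd\Gf}$ of \cite{brs_lmcs}) and the fact, noted after Definition~\ref{def:expr}, that $\oplus$ is axiomatized in \cite{brs_lmcs} as associative, commutative and idempotent with $\emp$ as neutral element.

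For part (1), since $(\Exp_\Gf,\delta_\Gf)$ is already a $\Gf$-coalgebra, it suffices to show that the subcoalgebra generated by a single expression $\E\in\Exp_\Gf$ is finite; one then takes $(S,g)$ to be that subcoalgebra and $s=\E$, and the bisimilarity $\E\sim\E$ is witnessed by the graph of the inclusion homomorphism $(S,g)\hookrightarrow(\Exp_\Gf,\delta_\Gf)$. Concretely, I would define the set of iterated \emph{derivatives} of $\E$ --- the least set containing $\E$ and closed under the transitions induced by $\delta_\Gf$, taken modulo the ACI-axioms for $\oplus$ --- and prove it finite by induction on the structure of $\E$. The constructor cases ($\emp$, $b$, $l\langle\cdot\rangle$, $r\langle\cdot\rangle$, $l[\cdot]$, $r[\cdot]$, $a(\cdot)$, $\{\cdot\}$, $\oplus$) are routine once one knows how $\delta_\Gf$ unfolds them. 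The delicate case is $\mu x.\gamma$: here one needs a Brzozowski-style observation that every derivative of $\mu x.\gamma$ is, up to the ACI-axioms, a finite $\oplus$-combination of expressions of the form $\E'[\mu x.\gamma / x]$ with $\E'$ ranging over the (finitely many, by the induction hypothesis) derivatives of $\gamma$, which keeps the whole derivative set finite.

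For part (2), given a finite $\Gf$-coalgebra $(S,g)$ with $S=\{s_1,\dots,s_n\}$, I would introduce one fresh fixed-point variable $x_i$ per state and build, by induction on the ingredient structure, for every ingredient $\F\lhd\Gf$ and every $\xi\in\F(S)$ an expression $e_\F(\xi)$ with $\vdash e_\F(\xi)\colon\F\lhd\Gf$ and free variables among $x_1,\dots,x_n$: a state $s_j\in\id(S)$ maps to $x_j$; an element $b\in\Bl$ to $b$; a pair to $l\langle e_{\F_1}(\cdot)\rangle\oplus r\langle e_{\F_2}(\cdot)\rangle$; an element of $\F_1\myplus\F_2$ to the corresponding $l[\cdot]$ or $r[\cdot]$ (with $\emp$ absorbing $\bot$ and $\top$ via the semilattice structure); a function in $\F^A(S)$ to $\bigoplus_{a\in A} a(e_\F(\cdot))$; and a finite subset $\{\xi_1,\dots,\xi_k\}\in\pow\F_1(S)$ to $\{e_{\F_1}(\xi_1)\}\oplus\cdots\oplus\{e_{\F_1}(\xi_k)\}$ (and $\emp$ for $\emptyset$). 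Setting $\gamma_i = e_\Gf(g(s_i))$, which is a guarded expression, yields a system $x_i = \gamma_i$ ($i\in\overline{1,n}$), which I would solve by successive $\mu$-abstraction and substitution in Gaussian-elimination style, obtaining closed, well-typed expressions $\E_{s_i}\in\Exp_\Gf$.

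The last and main step is to verify $\E_{s_i}\sim s_i$: one exhibits a bisimulation between $(\Exp_\Gf,\delta_\Gf)$ and $(S,g)$ relating each $\E_{s_i}$ to $s_i$, which reduces to checking that $\delta_\Gf(\E_{s_i})$ and $g(s_i)$ agree up to this relation. This is where the bulk of the effort lies, since it requires the precise behaviour of $\delta_\Gf$ on $\mu$-expressions and on each constructor, and the soundness of the fixpoint solution with respect to $\sim$ (equivalently, that the provable-equivalence relation axiomatized in \cite{brs_lmcs} is contained in bisimilarity). I expect the two genuinely delicate points to be the $\mu$-case of the finiteness argument in part (1) --- finding normal forms that close the induction --- and this soundness verification in part (2); the remaining structural inductions are routine.
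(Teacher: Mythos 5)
Your proposal is correct and follows essentially the same route as the proof the paper relies on (which is imported from~\cite{brs_lmcs} and only illustrated here by example): part~(1) by showing the subcoalgebra of $(\Exp_\Gf,\delta_\Gf)$ generated by an expression is finite once derivatives are taken modulo the ACI axioms for $\oplus$, and part~(2) by introducing one fixed-point variable per state, building guarded right-hand sides by induction on the ingredient structure, and eliminating variables by successive $\mu$-abstraction and substitution. The delicate points you flag (the $\mu$-case of the finiteness argument and the soundness of the fixpoint solution with respect to $\sim$) are exactly where the work lies in the cited proof as well.
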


In order to provide the reader with intuition over the notions presented above, we illustrate them with an example.
\begin{example}\label{ex:streams1}
Let us instantiate the definition of $\Gf$-expressions to the functor of
streams \ans{C.1.13} $\Rf = \Bl\times \id$ (the ingredients of this functor are $\Bl$, $\id$ and $\Rf$ itself).
Let $X$ be a set of
(recursion or) fixed-point variables. The set $\Exp_\Rf$ of {\em stream
expressions\/} is given by the set of closed, guarded expressions generated by the following BNF grammar. For $x \in X$: \ans{C.1.7}
\begin{equation}\label{eq:grammarStr}
\begin{array} {l@{\;}l@{\;}c@{\;}l}
\Exp_\Rf\ni &\E &::\!=&  \emp \mid \E \oplus \E \mid
        \mu x.\E \mid
x \mid l<\tau> \mid r<\E> \\
&\tau &::\!=& \emp \mid b \mid \tau\oplus\tau\\
\end{array}
\end{equation}
\end{example}
Intuitively, the expression $l\langle b \rangle$
is used to specify that the head of the stream is $b$, while
$r\langle \E \rangle$ specifies a stream whose tail behaves
as specified by $\E$.
For the two element join-semilattice $\Bl=\{0,1\}$ (with $\bot_\Bl=0$) examples of well-typed expressions include
$\emp$, $l<1>\oplus r<l<\emp>>$ and $\mu x. r<x> \oplus
l<1>$. The expressions $l[1]$, $l<1>\oplus 1$ and $\mu x. 1$ are examples of
non well-typed
expressions for $\Rf$, because the functor $\Rf$ does not involve $\myplus$, the
subexpressions in the sum have different type, and
recursion is not at the outermost level ($1$ has type $\Bl\lhd \Rf$),
respectively.

By applying the definition in \cite{brs_lmcs}, the coalgebra structure on expressions $\delta_\Rf$ would be given by:
\[
\begin{array}{lcl}
\multicolumn{3}{l}{\delta_\Rf \colon \Exp_\Rf \to \Bl\times \Exp_\Rf}\\
\delta_\Rf(\emp) &=& <\bot_{\Bl}, \emp>\\
\delta_\Rf(\E_1 \oplus \E_2) &=& <b_1\vee b_2, \E_1' \oplus\E_2'>\text{ where } <b_i, \E_i'> = \delta_\Rf(\E_i), \ i\in\overline{1,2}\\
\delta_\Rf(\mu x.\E) &=&  \delta_\Rf(\E[\mu x.\E/x]) \\
\delta_\Rf(l<\tau> ) &=& <\delta_{\Bl\lhd \Rf}(\tau),\emp>\\
\delta_\Rf(r<\E> ) &=&  <\bot_\Bl,\E>\\
\delta_{\Bl\lhd \Rf}(\emp) &=& \bot_B\\
\delta_{\Bl\lhd \Rf}(b) &=& b\\
 \delta_{\Bl\lhd \Rf}(\tau\oplus\tau') &=&  \delta_{\Bl\lhd \Rf}(\tau)  \vee \delta_{\Bl\lhd \Rf}(\tau')\\
\end{array}
\]
The proof of Kleene's theorem provides algorithms to go from expressions to streams and
vice-versa. We illustrate it by means of examples.

Consider the following stream:
\[
\xymatrix@C=1.2cm@R=.25cm{
& & &\\
*++[o][F]{s_1} \ar@{<-}[u]\ar[r]\ar@{=>}[d] & *++[o][F]{s_2}\ar@/^/[r]\ar@{=>}[d] & *++[o][F]{s_3}\ar@/^/[l]\ar@{=>}[d]\\
1&0&1
}
\]
We draw the stream with an automata-like flavor. The transitions indicate the tail of the stream represented by a state and the output value the head. In a more traditional notation, the above automata represents the infinite stream $(1,0,1,0,1,0,1,\ldots)$. 

\ans{C.2.4} To compute expressions $\E_1$, $\E_2$ and $\E_3$ equivalent to $s_1$, $s_2$ and $s_3$ we associate with each state $s_i$ a variable $x_i$ and get the equations:
\[
\eps_1 = \mu x_1. l<1> \oplus r<x_2>\ \ \ \eps_2 = \mu x_2 .l<0> \oplus r<x_3>\ \ \
\eps_3 = \mu x_3. l<1> \oplus r<x_2>
\]
As our goal is to remove all the occurrences of free variables in our expressions, we proceed as follows.
First we substitute $x_2$ by $\eps_2$ in $\eps_1$, and $x_3$ by $\eps_3$ in $\eps_2$, and obtain the following expressions:
\[
\eps_1 = \mu x_1. l<1> \oplus r<\E_2>\ \ \
\eps_2 = \mu x_2 .l<0> \oplus r<\E_3>\ \ \
\]
Note that at this point $\eps_1$ and $\eps_2$ already denote closed expressions. Therefore, as a last step, we replace $x_2$ in $\eps_3$ by $\eps_2$ and get the following closed expressions:
\[
\eps_1 = \mu x_1. l<1> \oplus r<\E_2>\ \ \
\eps_2 = \mu x_2 .l<0> \oplus r<\E_3>\ \ \
\eps_3 = \mu x_3. l<1> \oplus r< \mu x_2 .l<0> \oplus r<x_3>>
\]
satisfying, by construction, $\E_1\sim s_1$,  $\E_2\sim s_2$ and  $\E_3\sim s_3$.

For the converse construction, consider the expression $\E = (\mu x. r<x>) \oplus l<1>$. We construct an automaton by repeatedly applying the coalgebra structure on expressions $\delta_\Rf$, modulo associativity, commutativity and idempotence (ACI) of $\oplus$ in order to guarantee finiteness.

\ans{C.1.8} First, note that $\delta_\Rf(\mu x . r <x>) = \delta_\Rf(r<\mu x.r<x>>) = <\bottom_\Bl, \mu x . r <x>>$.
Applying the definition of $\delta_\Rf$ above, we have:
\[
\delta_\Rf(\E) = <1,  (\mu x. r<x>) \oplus \emp> \text{ and } \delta_\Rf((\mu x. r<x>) \oplus \emp) = <0, (\mu x. r<x>) \oplus \emp>
\]
which leads to the following stream (automaton):
\[
\xymatrix@C=1.5cm@R0.25cm{
*++[o][F]{\E}\ar[r]\ar@{=>}[d]  &
*+[l][F-:<3pt>]{(\mu x. r<x>) \oplus \emp}\ar@(dr,ur)\ar@{=>}[d] \\
1&0
}
\]

At this point, we want to remark that the direct application of $\delta_\Rf$, without ACI, might generate infinite automata. Take, for instance, the expression $\E = \mu x. r<x\oplus x>$ . Note that  $\delta_\Rf(\mu x . r <x\oplus x>) = <0,\E\oplus \E>$, $\delta_\Rf(\E\oplus \E) = <0,(\E\oplus \E)\oplus (\E\oplus \E)>$, and so on. This would generate the infinite automaton
\[
\xymatrix@C=1.5cm@R0.25cm{
*++[o][F]{\E}\ar[r]\ar@{=>}[d]  &
*+[l][F-:<3pt>]{\E\oplus\E}\ar@{=>}[d] \ar[r] & *+[l][F-:<3pt>]{(\E\oplus \E)\oplus (\E\oplus \E)}\ar@{=>}[d] \ar[r] & \ldots \\
0&0&0& \ldots
}
\]
\noindent instead of the intended, simple and very finite, automaton
\[
\xymatrix@C=1.5cm@R0.25cm{
*+[l][F-:<3pt>]{\E}\ar@(dr,ur)\ar@{=>}[d] \\
0
}
\]
In order to guarantee finiteness, one needs to identify the expressions modulo associativity, commutativity and idempotence (ACI), as we will discuss further in this paper. Moreover, the axiom $\E\oplus \emp \equiv \emp$ could also be used in order to obtain smaller automata, but it is not crucial for termination. 

Throughout the paper, we will often use streams as a
basic example to illustrate the definitions. It should be remarked
that the framework is general enough to include more complex examples,
such as deterministic automata, automata on guarded strings, Mealy machines and labelled transition systems.
The latter two will be used as examples in Section~\ref{sec:caseStudy}.

\section{A Decision Procedure for the Equivalence of Generalized Regular Expressions}\label{sec:dp}

In this section, we briefly describe the decision procedure to determine whether two expressions are equivalent or not. 

The key observation is that point $1.$ of Theorem~\ref{thm:kleene}  above guarantees that each expression in the language for a given system can always be associated to a \emph{finite} coalgebra. Given two expressions $\E_1$ and $\E_2$ in the language $\Exp_\Gf$ of a given functor $\Gf$ we can decide whether they are equivalent by constructing a \emph{finite} bisimulation between them. This is because the finite coalgebra generated from an expression contains precisely all states that one needs to construct the equivalence relation. Even though this might seem like a trivial observation, it has very concrete consequences: for (all well-typed) generalized regular expressions we can always either determine that they are bisimilar, and exhibit a proof in the form of a bisimulation, or conclude that they are not bisimilar and pinpoint the difference by showing why the bisimulation construction failed. Hence, we have a decision procedure for equivalence of generalized regular expressions.

We will give the reader a brief example on how the equivalence check works. Further examples, for different types of systems, including examples of non-equivalence, will appear in Section~\ref{sec:caseStudy}. 

We will show that the stream expressions $\eps_1 = \mu x . r<x> \oplus l<0> $ and $\eps_2 =  r< \mu x . r<x> \oplus l<0> > \oplus l<0> $ are equivalent. In order to do that, we have to build a bisimulation relation  $\R$ on expressions for the stream functor $\Rf$, defined above, such that  $(\eps_1, \eps_2) \in \R$. We do this in the following way: we start by taking $\R=\{(\E_1,\E_2)\}$ and we check whether this is already a bisimulation, by applying  $\delta_\Rf$ to each of the expressions and checking whether the expressions have the same output value and, moreover,  that no new pairs of expressions (modulo associativity, commutativity and idempotence, for more details see page~\pageref{just:ACI}) appear when taking transitions. If new pairs of expressions appear we add them to $\R$ and repeat the process. Intuitively, for this particular example, the transition structure can be depicted as follows:

\begin{figure}[h]
\centering
$\xymatrix@C=1.2cm@R=0.8cm{
& {\eps_1}\ar[d]\ar@{..}[rr]^ \R & & {\eps_2}\ar[d] & \R = \{(\eps_1, \eps_2)\} \\
&{\eps_1 }  \ar[d]& &   {\eps_1}\ar[d]\ar@{..}_{\text{not yet in }\R \text{; add it}} [ll]& \R = \{(\eps_1, \eps_2), (\eps_1, \eps_1)\}\\
& {\eps_1}\ar@{..}[rr]^ \R & & {\eps_1}& \checkmark
}$
\caption{Bisimulation construction}
\label{fig:streams-BC}
\end{figure}
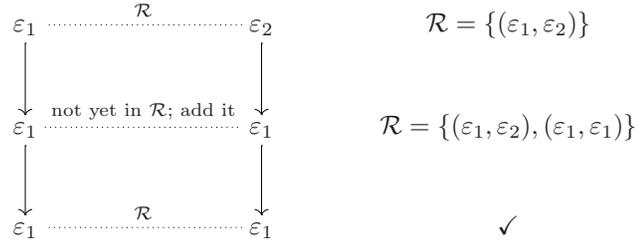

Here, we omit the output values of the expressions, which are all $0$. 
In the figure above, we use the notation $\xymatrix@C=0.7cm@R=0.7cm{
 {\eps_1}\ar@{-}[r]^\R&\eps_2}$ to denote $(\eps_1,\eps_2)\in\R$.
As illustrated in Figure~\ref{fig:streams-BC}, $\R = \{(\eps_1, \eps_2), (\eps_2, \eps_2)\}$ is closed under transitions and is therefore a bisimulation. Hence, $\eps_1$ and $\eps_2$ are bisimilar and specify the same infinite stream (concretely, the stream with only zeros).

\section{An Algebraic View on the Coalgebra of Generalized Regular\\ Expressions}
\label{sec:algSpec}

Recall that our goal is to reason about equality of generalized regular expressions in a fully automated manner. As we showed in the introduction, obtaining this equality can be achieved in two distinct ways: either algebraically, reasoning with the axioms, or coalgebraically, by constructing a bisimulation relation. The latter, because of its algorithmic nature, is particularly suited for automation. Automatic constructions of bisimulations have been widely explored in {\CIRC} and we will use this tool to implement our algorithm. This section contains material that enables us to soundly use {\CIRC}. We want to stress however that the main result of the paper is the description of a {\em decision procedure} to determine whether two expressions are equivalent or not. This procedure in turn could be implemented in any other suitable tool or even as a standalone application. Choosing {\CIRC} was natural for us, given the pre-existent work on bisimulation constructions. In Section~\ref{sec:dec-proced}, we show that the process of generating the $\Gf$-coalgebras associated to expressions by repeatedly applying $\delta_{\Gf}$ and normalizing the expressions obtained at each step is closely related to the proving mechanism already existent in {\CIRC}.

In Section~\ref{sec:prelim}, we have introduced a (theoretical) framework which, given a functor $\itG$, allows
for the uniform derivation of 1) a language $\itExp_\itG$ for specifying behaviors of
$\itG$-systems, and 2) a coalgebraic structure on $\itExp_\itG$, which provides an
operational semantics to the set of expressions.
In this context, given that {\CIRC} is based on algebraic specifications, we need two things in order to reach our final goal:
\begin{itemize}\itemsep2pt
\item extend and adapt the framework of Section~\ref{sec:prelim} in order to enable the implementation of a tool which allows the automatic derivation of \emph{algebraic specifications} that model 1) and 2) above, to deliver to {\CIRC};
\item provide a decision procedure, implemented in {\CIRC} based on an \emph{equational entailment relation}, in order to check
bisimilarity of expressions.
\end{itemize}

In the rest of the paper we will present the algebraic setting for reasoning on bisimilarity of generalized regular expressions.
\ans{C1.12} A brief overview on the parallel between the coalgebraic concepts in~\cite{brs_lmcs} and their algebraic correspondents introduced in this section is provided later, in Figure~\ref{fig:CoVsAlg}.

\begin{paragraph}{\bf{Algebraic specifications}}
An {\em algebraic specification} is a triple
${\mathcal E} = (S,\Sigma,E)$, where $S$ is a set of {\em sorts}, $\Sigma$ is a
{\em $S$-sorted signature} and $E$ is a set of
{\em conditional equations} of the form
$(\forall X)\,t=t'{\tt ~if~}(\bigwedge_{i\in I} u_i=v_i)$, where
$t$, $t'$, $u_i$, and $v_i$ ($i\in I$ -- a set of indices for the conditions) are $\Sigma$-terms with variables in $X$.
We say that the \textit{sort of the equation} is $s$
whenever $t, t' \in {\mathcal T}_{\Sigma, s}({X})$.
Here, ${\mathcal T}_{\Sigma, s}({X})$ denotes the set of terms of sort $s$ of the
$\Sigma$-algebra freely generated by $X$.
If $I=\{\}$ then the equation is \emph{unconditional} and may
be written as $(\forall X)\,t=t'$.

Let $\vdash$ be the
{\em equational entailment (deduction) relation} defined as in \cite{Goguen92order-sortedalgebra}.
For consistency reasons, we write $\mathcal E \vdash e$ whenever
equation $e$ is deducible from the equations $E$ in $\mathcal E$ by reflexivity, symmetry, transitivity, congruence or substitutivity ({\it{i.e.}}, whenever $E \vdash e$).

\end{paragraph}

\medskip

In this paper, the algebraic specifications of coalgebras of generalized regular expressions are built on top of definitions based on grammars in Backus-Naur form (BNF) such as (\ref{eq:fun-gram}) and (\ref{eq:grammar}). Therefore, in what follows, we introduce the general technique for transforming BNF notations into algebraic specifications.

\medskip
\begin{paragraph}{\bf From BNF grammars to algebraic specifications}
The general rule used for translating definitions based on BNF grammars into algebraic specifications is as follows: each syntactical category and vocabulary is considered as a sort and each production is considered as a constructor operation or a subsort relation.

For instance, according to the grammar (\ref{eq:fun-gram}) of non-deterministic functors, we have a sort $\SltN$ -- representing the vocabulary of join-semilattices $\itB$, a sort $\AlphbN$ -- for the vocabulary of the alphabets $A$, a sort $\Fun$ -- associated to the syntactical category of the non-deterministic functors $\Gf$, a subsort relation
$\SltN \ls \Fun$ representing the production $\Gf \,::\!\!= \itB$, and constructor operations for the other productions.

\ans{C.1.14} Generally, each production $A \,::\!\!= {\it{rhs}}$ gives rise to a constructor $({\it{rhs}}) \rightarrow (A)$, the direction of the arrow being reversed.
For instance, for grammar (\ref{eq:fun-gram}), the production $\Gf \,::\!\!= \id$ is represented by a constant (nullary operation) $\id : {}\to \it \Fun$, and
the sum construction by the binary operation $\_\myplus\!\_ \,\,:\, \it \Fun~\Fun \to \Fun$.

\begin{remark}
\ans{C.3.16}
Note that the above mechanism for translating BNF grammars into algebraic specifications makes use of subsort relations for representing productions such as $\Gf\,::\!\!=\Bl$. This is because {\CIRC} works with order-sorted algebras, and we want to keep the algebraic specifications of non-deterministic functors as close as possible to their implementation in {\CIRC}. However, order-sorted algebras can be reduced to many-sorted algebras~\cite{Goguen92order-sortedalgebra}, where a subsort relation $s \ls s'$ is modeled by an inclusion operation $c_{s,s'}\,:\, s \rightarrow s'$. This way, even if we use order-sorted algebras, we remain in the framework of circular coinduction.
\end{remark}

\end{paragraph}
 
\medskip
The algebraic specifications of coalgebras of generalized regular expressions are defined in a modular fashion, based on the specifications of:
\begin{itemize}\itemsep0pt
\item non-deterministic functors ($\Gf$);
\item generalized regular expressions ($\eps \in \Exp_\Gf$);
\item ``transition" functions ($\delta_\Gf$);
\item ``structured" expressions ($\sigma \in \F(\Exp_\Gf)$, for all $\F$ ingredients of $\Gf$).
\end{itemize}

Moreover, recall that for a non-deterministic functor $\Gf$, bisimilarity of $\Gf$-expressions is decided based on the relation lifting $\overline{\Gf}$ over ``structured" expressions in $\Gf(\Exp_\Gf)$ (Definition~\ref{def:bis}). Therefore, the deduction relation $\vdash$ has to be extended to allow a restricted contextual reasoning over ``structured" expressions in $\F(\Exp_\Gf)$, for all ingredients $\F$ of $\Gf$.

The aforementioned algebraic specifications and the extension of $\vdash$ are modeled as follows.

\medskip
\begin{paragraph}{\bf The algebraic specification
of a non-deterministic functor $\Gf$} It includes:
\begin{itemize}
\item the translation of the BNF grammar~(\ref{eq:fun-gram}), as presented above;
\item the specification of the functor ingredients, given by a sort $\Ingred$ and a constructor $\_\triangleleft\_{} \ct \Fun~\Fun \to \Ingred$ (according to Definition~\ref{def:ingred});
\item the specification of each alphabet $A=\{a_1,\ldots,a_n\}$ occurring in the definition of $\Gf$: this consists of a subsort $A \ls \Alphb$, a constant $a_i:{}\to A$ for $i\in\overline{1,n}$, and a distinguished constant $A$ of sort $\AlphbN$ used to refer the alphabet in the definition of the functor;
\item the specification of each semilattice  $\itB=(\{b_1,\ldots,b_n\},\lor,\bot_B)$ occurring in the definition of $\Gf$: this consists of a subsort $\itB \ls \Slt$, a constant $b_i:{}\to \itB$ for $i\in\overline{1,n}$, a distinguished constant $\itB$ of sort $\SltN$ used to refer the corresponding semilattice in the definition of the functor, and the equations defining $\lor$ and $\bottom_B$ (this should be one of $b_i$);
\item an equation defining $\Gf$ (as a functor expression).
\end{itemize}
\end{paragraph}

\medskip
\begin{paragraph}{\bf The algebraic specification of generalized regular expressions} It consists of:
\begin{itemize}
\item (according to the BNF grammar
in Definition~\ref{def:expr}) a sort $\Exp$ representing expressions $\eps$, $\Fixpv$ the sort for the vocabulary of the fixed-point variables, and $\Slt$ the sort for the elements of semilattices. Moreover, we consider constructor operations for all the productions. For example, the production $\eps\, ::\!\!= \eps \oplus \eps$ is represented by an operation $\_\oplus\_\,\ct\Exp\,\,\Exp \to \Exp$, \ans{C.4.3} and $\eps\, ::\!\!= \mu x . \gamma$ is represented by $\mu\_.\_\,\ct\Fixpv\,\,\Exp\to\Exp$. (We chose not to provide any restriction to guarantee that $\gamma$ is a guarded expression, at this stage in the definition of $\mu\_.\_$. However, guards can be easily checked by pattern matching, according to the grammars in Definition~\ref{def:expr});

\item the specification of the substitution of a fixed-point variable with an expression, given by an operation
$\textbf{\_[\_\,/\_]} \ct \Exp~\Exp~\Fixpv \to \Exp$ and a set of equations -- one for each constructor. \ans{C.4.3} For example, the equations associated to $\emp$ and $\oplus$ are: $\emp[\eps/x] = \emp$, and respectively, $(\eps_1 \oplus \eps_2)[\eps/x] = (\eps_1[\eps/x]) \oplus (\eps_2[\eps/x])$, where $\eps, \eps_1, \eps_2$ are $\Gf$-expressions and $x$ is a fixed-point variable;

\item the specification of the type-checking relation in Definition~\ref{def:ts},
given by an operation 
$\_\,\ct\!\_ \,\,\ct \Exp~\Ingred\to {\sf Bool}$ and an equation for each inference rule defining this relation. For example the rule
\[
\rules{\vdash \E_1 \colon \F\lhd \Gf\;\;\;\; \vdash\E_2\colon \F\lhd \Gf}{\vdash \E_1\oplus\E_2 \colon \F\lhd \Gf}
\]
is represented by the equation
$
\E_1 \oplus \E_2 \ct \FtrlG =
\E_1 \ct \FtrlG \land \E_2 \ct \FtrlG
$. The type-checking operator is used in order to verify whether the expressions checked for equivalence are well-typed (Definition~\ref{def:g-expr}). Moreover, note that for the consistency of notation, algebraically we write
$\eps \ct \FtrlG$ to represent expressions $\eps$ of type $\FtrlG$.
\end{itemize}
\end{paragraph}

\medskip
\begin{paragraph}{\bf The algebraic specification of $\delta_\Gf$} It consists of:
\begin{itemize}
\item the specification of the coalgebra of $\Gf$-expressions $\delta_\Gf$ given by three operations
$\it \delta\_(\_) \ct \Ingred~\Exp\to \ExpS$, $\it Empty \ct \Ingred \to \ExpS$, and $\it Plus\_(\_,\_) \ct \Ingred~\ExpS~\ExpS\to \ExpS$;

\item a set of equations describing the definitions of these operations as in~\cite{brs_lmcs}.
\end{itemize}
\end{paragraph}

\medskip
\begin{paragraph}{\bf The algebraic specification of structured expressions}
As mentioned above, the set of $\Gf$-expressions is provided with a coalgebraic structure given by the function $\delta_{\Gf} \colon {\Exp}_{\Gf} \to {\Gf}({\Exp}_{\Gf})$, where ${\Gf}({\Exp}_{\Gf})$ can be understood as the set of expressions with structure given by $\Gf$ (and its ingredients). The set of structured expressions is defined by the following grammar:
\begin{equation}\label{eq:struct-expr}
\sigma \;::\!=\; \eps\mid b \mid \langle\sigma,\sigma\rangle\mid k_1(\sigma)\mid
               k_2(\sigma)\mid\bot\mid\top\mid\lambda {.}(a,\FtrlG,\sigma) \mid \{\sigma\}
\end{equation}
where $\eps \in \Exp_\Gf$ and $b\in \Bl$. The typing rules below give precise meaning to these expressions. Note that $\bot,\top$ are two expressions coming from $\Gf = \Gf_1 \myplus \Gf_2$, used to denote underspecification and overspecification, respectively.

The associated algebraic specification includes:
\begin{itemize}
\item a sort $\ExpS$ representing expressions $\sigma$ (from {$\F(\Exp_\Gf)$}, with $\FtrlG$), and one operation for each production in the BNF grammar~(\ref{eq:struct-expr}). Note that the construction $\lambda {.}(a,\FtrlG,\sigma)$ has as coalgebraic correspondent a function
$f \in \itF^{A}(\Exp_{\Gf})$, and is defined by cases as follows:
$\lambda {.}(a,\FtrlG,\sigma)(a')$ = {\it if} $(a=a')$ {\it then} $\sigma$ {\it else} ${\it Empty}_{\FtrlG}$;

\item the extension of the type-checking relation to structured expressions, defined by:
\\[0.5ex]
$
\begin{array}{@{\hspace{0.5cm}}l@{\hspace{0.5cm}}l}
\dfrac{\vdash b\ct \BtrlG}
       {\vdash b\in \Bl({\Exp}\,\Gf)}
&
\dfrac{\vdash \eps\ct \IdtrlG}
       {\vdash \eps\in {\id}({\Exp}\,\Gf)}
\\[3ex]
\dfrac{}
       {\vdash \bot\in \F_1{\myplus}\F_2({\Exp}\,\Gf)}
&
\dfrac{}
       {\vdash \top\in \F_1{\myplus}\F_2({\Exp}\,\Gf)}
\\[3ex]
\dfrac{\vdash \sigma\in \F_i({\Exp}\,\Gf)}
       {\vdash k_i(\sigma)\in \F_1{\myplus}\F_2({\Exp}\,\Gf)}~i\in\overline{1,2}
&
\dfrac{\vdash \sigma_1\in \F_i({\Exp}\,\Gf)\qquad\vdash \sigma_2\in \F_i({\Exp}\,\Gf)}
       {\vdash \langle\sigma_1,\sigma_2\rangle\in
\F_1{\times}\F_2({\Exp}\,\Gf)}\\[3ex]
\dfrac{\vdash \sigma\in \F({\Exp}\,\Gf),~a\in A}
       {\vdash \lambda.(a,\FtrlG,\sigma)\in \F^A({\Exp}\,\Gf)}
&
\dfrac{\vdash \sigma\in \F({\Exp}\,\Gf)}
       {\vdash \{\sigma\}\in \pow \F({\Exp}\,\Gf)}
\end{array}
$\\[1.5ex]
and specified by an operation $\_\in\!\_({\Exp}\,\_) \,\ct \it \ExpS~\Fun~\Fun\to {\sf Bool}$
(where we used a mix-fix notation) and an equation for
each of the above inference rules.
For example, the first rule has associated the equation $b\in \Bl({\Exp\,\Gf}) = b\ct \BtrlG$.
For consistency of notation, we write
$\sigma \in \F({\Exp}_\Gf)$ to denote that $\sigma$ is an element
of $\F({\Exp}_\Gf)$.
\end{itemize}
\end{paragraph}

\begin{remark}
In terms of membership equational logic (MEL)~\cite{Bouhoula-Jouannaud-Meseguer00},
both $\FtrlG$ and $\F({\Exp}\,\Gf)$ can be thought of as being sorts and,
for example, $\eps\ct\FtrlG$ as a membership assertion. Even if MEL is an elegant theory,
we prefer not to use it here because this implies the dynamic declaration of sorts and a set of assertions for such a sort.  The above approach is generic and therefore more flexible.
\end{remark}

\medskip
\begin{paragraph}{\bf The equational entailment relation $\vdashInd$ for bisimilarity checking}
As previously hinted in the beginning of this section, in order to algebraically reason on bisimilarity of $\Gf$-expressions in {\CIRC}, one has to extend the deduction relation $\vdash$ to allow a restricted contextual reasoning on expressions in $\F(\Exp_\Gf)$, for all ingredients $\F$ of a non-deterministic functor $\Gf$. We call the extended entailment $\vdashInd$.

The aforementioned restriction refers to inhibiting the use of congruence during equational reasoning, in order to guarantee the soundness of {\CIRC} proofs. This is realized by means of a \emph{freezing operator}, which intuitively behaves as a wrapper on the expressions checked for equivalence, by changing their sort to a fresh sort {\sf Frozen}. This way, the hypotheses collected during a {\CIRC} proof session cannot be used freely in contextual reasoning, hence preventing the derivation of untrue equations (as illustrated in Example~\ref{ex:streams}).

We further show how the freezing mechanism is implemented in our algebraic setting, and define $\vdashInd$.

Let ${\mathcal E}$ be an algebraic specification. We extend $\mathcal E$ by adding the freezing operation
$\efr{-} \ct s \rightarrow {\sf Frozen}$ for each sort $s \in \Sigma$,
where $\sf Frozen$ is a fresh sort.
By $\efr{t}$ we represent the \textit{frozen} form of a $\Sigma$-term $t$,
and by
$\efr{e}$ a \textit{frozen equation} of the shape
$(\forall X)\, \efr{t\phantom{\!'}} = \efr{t'} \textnormal{ if } c$. The entailment relation $\vdash$ is defined over frozen equations following the line in~\cite{rosu-lucanu-2009-calco}; \ans{C.1.11, C.2.5} more details are provided in Section~\ref{sec:dec-proced}.

\ans{C.1.17} Recall from Section~\ref{sec:prelim} that a relation
$\R \subseteq \Exp_\Gf \times \Exp_\Gf$ is a bisimulation if and only if $(s, t) \in \R \Rightarrow (\delta_{\GtrlG}(s), \delta_{\GtrlG}(t))\in \overline \Gf(\R)$. Here, $\overline{\Gf}(\R) \subseteq \Gf(\Exp_\Gf) \times \Gf(\Exp_\Gf)$ is the lifting of the relation $\R \subseteq \Exp_\Gf \times \Exp_\Gf$, defined as 

$$\overline \Gf(\R) = \{ (\Gf(\pi_1)(x),\Gf(\pi_2)(x)) \mid x \in \Gf(\R) \}\ .$$

 So, intuitively, reasoning on bisimilarity of two expressions $(\eps, \eps')$ in $\R$ reduces to checking whether the application of $\delta_\Gf$ maps them into $\overline{\Gf}(\R)$.
 
Therefore, checking whether a pair $(s^{\delta}, t^{\delta})$ is in $\overline{\Gf}(\R)$ consists in checking, for example for the case of $\Gf = \Gf_1 \times \Gf_2$, whether $(s_{1}^{\delta}, t_{1}^{\delta}) \in \overline{\Gf_1}(\R)$ and $(s_{2}^{\delta}, t_{2}^{\delta}) \in \overline{\Gf_2}(\R)$, where $s^{\delta} = <s^{\delta}_{1}, s^{\delta}_{2}>$ and $t^{\delta} = <t_{1}^{\delta}, t_{2}^{\delta}>$.
In an algebraic setting, this would reduce to building an algebraic specification $\algspecN$ and defining an entailment relation $\vdashInd$ such that one can infer $\algspecN \vdashInd \efr{<s^{\delta}_1, s^{\delta}_2>} = \efr{<t^{\delta}_1, t^{\delta}_2>}$ (this is the algebraic correspondent we consider for $(<s^{\delta}_1, s^{\delta}_2>, <t^{\delta}_1, t^{\delta}_2>) \in \overline{\Gf}(\R)$) by showing $\algspecN \vdashInd \efr{s^{\delta}_1} = \efr{t^{\delta}_1}$ (or $(s^{\delta}_1, t^{\delta}_1) \in \overline{\Gf_1}(\R)$) and 
$\algspecN \vdashInd \efr{s^{\delta}_2} = \efr{t^{\delta}_2}$ (or $(s^{\delta}_2, t^{\delta}_2) \in \overline{\Gf_2}(\R)$).
We hint that the aforementioned algebraic specification $\algspecN$ consists of $\algspec$ and a set of frozen equations (see Corollary~\ref{cor:ii}).

The entailment relation $\vdashInd$ for reasoning on bisimilarity of $\Gf$-expressions is based on the definition of $\overline{\Gf}$.
%
%
\begin{definition}
\label{def:PF}
The entailment relation $\vdashInd$ is the extension of $\vdash$ with the following inference rules,
which allow a  restricted contextual reasoning over the frozen equations of structured expressions:
\begin{equation}
\frac{\algspec  \vdashInd \efr{\sigma_1 \phantom{\hspace{-2.5ex}\sigma_{1}'}} = \efr{\sigma'_1} \,\,\,\,\,\,
\algspec  \vdashInd \efr{\sigma_2\phantom{\hspace{-2.5ex}\sigma_{2}'}} = \efr{\sigma_2'}}
{\algspec  \vdashInd \efr{\langle \sigma_1, \sigma_2\rangle} =
\efr{\langle \sigma_1', \sigma_2'\rangle}}
\label{rl:times}
\end{equation}
\vspace{-1ex}
\begin{equation}
\frac{\algspec  \vdashInd \efr{\sigma\phantom{\hspace{-2.1ex}\sigma'}} = \efr{\sigma'}}
{\algspec  \vdashInd \efr{k_i(\sigma)} = \efr{k_i(\sigma')}}~i \in \overline{1,2}
\label{rl:plus}
\end{equation}
\begin{equation}
\frac{\algspec  \vdashInd \efr{f(a)} = \efr{g(a)}\, , ~{\it for~all~} a \in {\it A}}
{\algspec  \vdashInd \efr{f} = \efr{g\phantom{\hspace{-1.1ex}f}}}\label{rl:expo}
\end{equation}
\begin{equation}
\frac{\algspec  \vdashInd \efr{\sigma_{i_1}\phantom{\hspace{-3.0ex}\sigma'_{j_1}}} = \efr{\sigma'_{j_1}}\,,
\ldots,\,
\algspec  \vdashInd \efr{\sigma_{i_k}\phantom{\hspace{-3.4ex}\sigma'_{j_k}}} = \efr{\sigma'_{j_k}}}
{\algspec  \vdashInd \efr{\{\sigma_1, \ldots, \sigma_n\}} = \efr{\{\sigma'_1, \ldots, \sigma_m'\}}}{\small
\begin{array}{l} \{i_1,\ldots, i_k\} = \{1,\ldots, n\}\\  \{j_1,\ldots, j_k\} = \{1,\ldots, m\}\end{array}
}\label{rl:pow}
\end{equation}

\end{definition}

\begin{remark}
Note that the extension of the entailment relation $\vdash$ to $\vdashInd$ implies that
${\algspec  \vdash e} \textnormal{\it{ iff }} {\algspec  \vdashInd e}$ holds, for any equation $e$ of shape $\,\efr{\eps_1} = \efr{\eps_2}$ or $\eps_1 = \eps_2$, with $\eps_1, \eps_2$ non-structured expressions. Below, we will use the notation $\algspec  \vdashInd \R$, where $\R$ is a set of possibly frozen equations, to denote $\forall_{e\in\R}\cdot\algspec  \vdashInd e$. 
\end{remark}

It is interesting to recall the relation lifting for the powerset functor which is encoded in the last rule of Definition~\ref{def:PF}. A pair $(U,V)$ is in  $\overline{\pow\Gf}(\R)$ if and only if for every $u\in U$ there exists a $v\in V$ such that $(u,v)$ belongs to $\overline\Gf(\R)$ and, conversely, for every $v\in V$, there exists a $u\in U$ such that $(u,v)$ belongs to $\overline\Gf(\R)$.

\begin{remark}
\ans{C.2.9}
As already hinted (and proved in Corollary~\ref{cor:ii}), reasoning on bisimilarity of expressions in a binary relation $\R \subseteq \Exp_\Gf \times \Exp_\Gf$ reduces to showing that 
$\efr{\delta_\Gf(s)} = \efr{\delta_\Gf(t)}$ is a $\vdashInd$-consequence, for all $(s,t) \in \R$.
The equational proof is performed in a ``top-down" fashion, by reasoning on the subsequent equalities between the components of the corresponding structured expression $\delta_\Gf(s)$, $\delta_\Gf(t)$ in an inductive manner. This is realized by applying the inverted rules (\ref{rl:times})--(\ref{rl:pow}).

Moreover, note that rule (\ref{rl:pow}) is not invertible in the usual sense; rather any statement matching the form of the conclusion can only be proved by some instance of the rule.
\end{remark}
\end{paragraph}


We will further formalize the connection between the inductive definition of $\overline{\Gf}$ (on the coalgebraic side)
and $\vdashInd$ (on the algebraic side) in Theorem~\ref{thm:i}, hence enabling the
definition of bisimulations in algebraic terms,
in Corollary~\ref{cor:ii}.

\begin{remark}\label{rem:eqRed}
Equations in $\algspec$ (built as previously described in this section) are used in the equational reasoning only for reducing terms of shape ${\sf op}(t_1, \ldots, t_n)$ according to the definition of the operation ${\sf op}$. For the simplicity of the proofs of Theorem~\ref{thm:i} and Corollary~\ref{cor:ii}, whenever we write ${\sf op}(t_1, \ldots, t_n)$, we refer to the associated term reduced according to the definition of ${\sf op}$.
\end{remark}

First we introduce some notation conventions.
Let $\Gf$ be a non-deterministic functor and $\R \subseteq \Exp_{\Gf} \times \Exp_{\Gf}$.
We write:
\begin{itemize}\itemsep1pt
\item $\RId$ to denote the set $\RG \cup \{(\eps, \eps) \mid\algspec \vdash \eps \ct \GtrlG = {\it true}\}$;
\item $\cl{\R}$ for the closure of $\R$ under transitivity, symmetry and reflexivity;
\item $\efr{\mathcal R}$ to represent the set $\bigcup_{e \in {\mathcal R}}\{\efr{e}\}$;
(application of the freezing operator to all elements of $\R$)
\item $\delta_{\GtrlG}(\eps=\eps')$ to represent the equation $\delta_{\GtrlG}(\eps)=\delta_{\GtrlG}(\eps')$;

\item $\algspec \cup \efr{\R}$ as a shorthand for
$(S, \Sigma, E \cup \{\efr{\E\phantom{\hspace{-1.5ex}\E'}} = \efr{\E'} \mid (\E, \E') \in \R \})$, where $\algspec = (S, \Sigma, E)$;

\item \ans{C.2.6} $(\sigma, \sigma') \in  \overline\Gf(\RG)$ as a shorthand for: $(\sigma, \sigma')$ is among the enumerated elements of a set $S$ explicitly constructed as an enumeration of the finite set $\overline\Gf(\RG)$ (in the algebraic setting, $\overline\Gf(\RG)$ is a subset of ${\mathcal T}_{\Sigma, \ExpS\!\!} \times {\mathcal T}_{\Sigma, \ExpS\!\!}$ and $\algspec \vdash \overline\Gf(\RG) = S$).

\end{itemize}

\begin{theorem}\label{thm:i}
Consider a non-deterministic functor $\Gf$. Let $\F$ be an ingredient of $\Gf$, $\RG$ a binary relation on
the set of $\Gf$-expressions,
and $\sigma, \sigma' \in {\F}(\Exp_{\itG})$.
\begin{itemize}\itemsep2pt
\item[a)] If $\Gf$ is not a constant functor, then
$(\sigma, \sigma') \in \overline{\F}(\cl{\RId})$ iff
$\algspec \cup \efr{\RG} \vdashInd \efr{\sigma} = \efr{\sigma'}$;

\item[b)] If $\Gf$ is a constant functor $\Bl$, then
$(\sigma, \sigma') \in \overline{\Bl}(\cl{\RId})$ iff
$\algspec \vdashInd \efr{\sigma\phantom{\hspace{-1.7ex}\sigma'}} = \efr{\sigma'}$.
\end{itemize}
\end{theorem}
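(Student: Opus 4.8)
The plan is to prove both directions of the equivalence by structural induction on the ingredient $\F$ of $\Gf$, proceeding case by case according to the grammar (\ref{eq:fun-gram}) that builds $\F$ (namely $\F \in \{\id, \Bl, \F_1 \myplus \F_2, \F_1 \times \F_2, \F^A, \pow\F\}$). The two halves (a) and (b) are not really independent: part (b) is the base case $\F = \Bl$ for the induction driving part (a), and the hypothesis ``$\Gf$ is not a constant functor'' in (a) is exactly what is needed so that the sort $\Exp_\Gf$ is non-trivial and the frozen equations $\efr{\RG}$ can be meaningfully added. So I would first dispatch (b), then do (a) by induction, invoking (b) and the inductive hypotheses on $\F_1, \F_2, \F$ as needed.

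For part (b): an element of $\overline{\Bl}(\cl{\RId})$ is a pair $(\sigma,\sigma')$ obtained as $(\Bl(\pi_1)(x), \Bl(\pi_2)(x))$ for some $x \in \Bl(\cl{\RId})$; but $\Bl$ is a constant functor, so $\Bl(\pi_i) = \mathrm{id}_\Bl$ and $x$ is just an element $b \in \Bl$, whence $\sigma = \sigma' = b$. Thus $(\sigma,\sigma') \in \overline{\Bl}(\cl{\RId})$ iff $\sigma = \sigma'$ as elements of $\Bl$ (more precisely, $\algspec \vdash \sigma = \sigma'$, since the semilattice equations for $\lor, \bottom_\Bl$ may be needed to normalize). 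On the algebraic side $\efr{\sigma} = \efr{\sigma'}$ is a $\vdashInd$-consequence iff it is a $\vdash$-consequence (no rule of Definition~\ref{def:PF} applies to the frozen semilattice elements, since $b \in \Bl({\Exp}\,\Gf)$ reduces to $b \ct \BtrlG$ via a plain equation), which by soundness and completeness of equational logic holds iff $\algspec \vdash \sigma = \sigma'$. The two sides match, using Remark~\ref{rem:eqRed} to handle the reduction of $\lor$-terms.

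For part (a), the $\F = \id$ case uses that $\overline{\id}(\cl{\RId}) = \cl{\RId}$ and that $\sigma, \sigma' \in \id(\Exp_\Gf) = \Exp_\Gf$ are ordinary expressions; here $(\sigma,\sigma') \in \cl{\RId}$ iff $\algspec \cup \efr{\RG} \vdashInd \efr{\sigma} = \efr{\sigma'}$ because $\cl{\cdot}$ is exactly closure under the reflexivity/symmetry/transitivity rules of $\vdash$, the reflexive pairs in $\RId$ correspond to the well-typedness condition $\algspec \vdash \eps \ct \GtrlG = \mathit{true}$, and on frozen non-structured expressions $\vdashInd$ coincides with $\vdash$ (the Remark after Definition~\ref{def:PF}) — so the only available deductions are precisely those generating $\cl{\RId}$, with congruence blocked by the freezing. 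For the inductive cases $\F_1 \times \F_2$, $\F_1 \myplus \F_2$, $\F^A$, $\pow\F$, the relation lifting $\overline{\F}$ decomposes componentwise exactly as the inference rules (\ref{rl:times})–(\ref{rl:pow}) decompose $\vdashInd$: e.g.\ $(\langle\sigma_1,\sigma_2\rangle, \langle\sigma_1',\sigma_2'\rangle) \in \overline{\F_1\times\F_2}(\cl{\RId})$ iff $(\sigma_i,\sigma_i') \in \overline{\F_i}(\cl{\RId})$ for $i \in \overline{1,2}$, which by the induction hypothesis is iff $\algspec \cup \efr{\RG} \vdashInd \efr{\sigma_i} = \efr{\sigma_i'}$, which by rule (\ref{rl:times}) (read both forwards and, using that it is the only applicable rule, backwards) is iff $\algspec \cup \efr{\RG} \vdashInd \efr{\langle\sigma_1,\sigma_2\rangle} = \efr{\langle\sigma_1',\sigma_2'\rangle}$; the $\myplus$ case is analogous with (\ref{rl:plus}) plus the trivial $\bot,\top$ subcases, and the exponent case uses (\ref{rl:expo}) together with finiteness of $A$.

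The main obstacle, and the case requiring the most care, is the powerset case $\F = \pow\F'$ together with rule (\ref{rl:pow}). The relation lifting $\overline{\pow\F'}(\cl{\RId})$ relates $\{\sigma_1,\dots,\sigma_n\}$ to $\{\sigma_1',\dots,\sigma_m'\}$ iff every $\sigma_i$ is $\overline{\F'}(\cl{\RId})$-related to some $\sigma_j'$ and vice versa — a $\forall\exists$/$\exists\forall$ (``Egli–Milner'') condition — whereas rule (\ref{rl:pow}) provides a single list of matching pairs $\{i_1,\dots,i_k\}=\{1,\dots,n\}$, $\{j_1,\dots,j_k\}=\{1,\dots,m\}$ covering both sides. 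I need to argue these are equivalent: given the Egli–Milner witnesses one can assemble a surjective-on-both-coordinates list of index pairs, and conversely such a list witnesses both halves of Egli–Milner; this is a purely combinatorial bookkeeping step but must be done explicitly, and it is complicated by the fact (flagged in the Remark after Definition~\ref{def:PF}) that rule (\ref{rl:pow}) is not invertible in the naive sense, so the ``backwards'' reading of the induction has to be phrased as: any $\vdashInd$-proof of a frozen set-equation must end with an instance of (\ref{rl:pow}) (no other rule produces a frozen $\{\cdots\}$-equation, once congruence is frozen out), and then peel off its premises. I would also note the subtlety that set-expressions are taken modulo the ACI axioms for set union, so the enumerations $\sigma_1,\dots,\sigma_n$ are representative-dependent; this is harmless because $\overline{\pow\F'}$ and rule (\ref{rl:pow}) are both invariant under reordering and duplication, but it is worth a remark. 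Everywhere, Remark~\ref{rem:eqRed} lets me treat $\delta$, $\mathit{Empty}$, $\mathit{Plus}$ and type-checking terms as already reduced, so the only ``interesting'' equational steps are the frozen structural ones governed by $\vdashInd$, keeping the correspondence tight.
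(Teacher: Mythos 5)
Your proposal is correct and follows essentially the same route as the paper: structural induction on the ingredient $\F$ in both directions, with base cases $\Bl$ and $\id$ (the latter via the closure/reflexivity--symmetry--transitivity correspondence, congruence being blocked by freezing) and inductive cases matching the relation lifting against rules~(\ref{rl:times})--(\ref{rl:pow}) read forwards and in inverted form; your explicit treatment of the Egli--Milner condition for the powerset case is in fact more detailed than the paper's, which dismisses it as ``similar''. The only small imprecision is identifying part~b) with the base case $\F=\Bl$ of part~a): the former concerns $\Gf=\Bl$ (no frozen relation is added), the latter $\Gf\neq\Bl$ with $\efr{\RG}$ present but inapplicable by sort -- though both reduce to the same reflexivity argument.
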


\noindent
In order to prove Theorem~\ref{thm:i}.$a)$ we introduce the following lemma:

\begin{lemma}\label{lm:idir}
Consider $\Gf$ a non-deterministic functor and
$\RG$ a binary relation on
the set of $\Gf$-expressions.
If $(\eps, \eps') \in \cl{\RId}$ then
$\algspec \cup \efr{\RG} \vdashInd \efr{\eps\phantom{\hspace{-1.7ex}\eps'}} = \efr{\eps'}$.
\end{lemma}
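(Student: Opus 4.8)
The plan is to prove Lemma~\ref{lm:idir} by induction on the derivation that witnesses $(\eps,\eps')\in\cl{\RId}$. Recall $\cl{\RId}$ is the closure of $\RId = \RG\cup\{(\eps,\eps)\mid \algspec\vdash \eps\ct\GtrlG = \mathit{true}\}$ under transitivity, symmetry and reflexivity, so there are essentially five cases to handle: the pair comes directly from $\RG$; the pair is a diagonal pair $(\eps,\eps)$ with $\eps$ a well-typed $\Gf$-expression; or the pair is obtained from previously-established pairs by one application of reflexivity, symmetry, or transitivity. In each case the goal is to show $\algspec\cup\efr{\RG}\vdashInd \efr{\eps}=\efr{\eps'}$, i.e.\ that the corresponding frozen equation is derivable in the extended specification.

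First I would dispatch the base cases. If $(\eps,\eps')\in\RG$, then $\efr{\eps}=\efr{\eps'}$ is literally one of the frozen equations added to $\algspec$ to form $\algspec\cup\efr{\RG}$, so it is derivable by assumption (a single use of the axiom). If instead $(\eps,\eps')$ is a diagonal pair $(\eps,\eps)$, then $\efr{\eps}=\efr{\eps}$ follows by reflexivity of $\vdash$ (and hence of $\vdashInd$, since the remark notes $\vdash$ and $\vdashInd$ agree on non-structured equations). For the inductive cases, reflexivity, symmetry and transitivity of $\cl{\RId}$ are matched directly by the corresponding structural rules of the equational entailment relation $\vdash$: if $\efr{\eps}=\efr{\eps''}$ and $\efr{\eps''}=\efr{\eps'}$ are both derivable by the induction hypothesis, then $\efr{\eps}=\efr{\eps'}$ is derivable by transitivity, and symmetrically for the other two closure operations. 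The key point making all of this legitimate is that these are frozen \emph{non-structured} equations (both sides are frozen expressions in $\Exp_\Gf$, not structured expressions in $\F(\Exp_\Gf)$), so no appeal to the restricted contextual rules~(\ref{rl:times})--(\ref{rl:pow}) is needed, and the congruence rule — which is the one $\vdashInd$ deliberately inhibits on frozen terms — is never invoked either; we only use reflexivity, symmetry, transitivity, which remain freely available.

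The main obstacle, such as it is, is bookkeeping rather than mathematics: one must be careful that the induction is on the \emph{structure of the closure derivation} and not on the expressions, and that at each step the ambient specification is $\algspec\cup\efr{\RG}$ (so the hypotheses $\efr{\eps}=\efr{\eps'}$ for $(\eps,\eps')\in\RG$ are genuinely available as axioms, per the shorthand introduced just before the theorem). A secondary subtlety is justifying that the diagonal pairs cause no trouble — here one simply observes that well-typedness of $\eps$ is irrelevant to deriving $\efr{\eps}=\efr{\eps}$, which is a pure instance of reflexivity. Once these cases are laid out, the lemma follows immediately, and it feeds into the proof of the forward (``only if'') direction of Theorem~\ref{thm:i}.$a)$ by handling the leaves $\sigma=\eps$, $\sigma'=\eps'$ of the structured-expression recursion, where $\overline{\F}$ with $\F=\id$ reduces to membership in $\cl{\RId}$ itself.
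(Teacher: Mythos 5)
Your proof is correct and follows essentially the same route as the paper's: the paper dispatches the lemma as ``trivial, as equality is reflexive, symmetric and transitive,'' which is exactly the induction on the closure derivation you spell out (pairs from $\RG$ are frozen axioms of $\algspec\cup\efr{\RG}$, diagonal pairs follow by reflexivity, and the closure steps are matched by reflexivity, symmetry and transitivity of $\vdashInd$). Your additional observation that no congruence or structured-expression rules are needed is a correct and worthwhile clarification, but it does not change the argument.
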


\begin{proof}
\ans{C.2.7}
The proof is trivial, as equality is reflexive, symmetric and transitive.
\qed
\end{proof}

We are now ready to prove Theorem~\ref{thm:i}.
\begin{proof}[Theorem~\ref{thm:i}]
\verb##
\begin{itemize}
\item Proof of Theorem~\ref{thm:i}.$a)$.\\
\begin{itemize}\itemsep10pt
\item {$``\Rightarrow"$.}
The proof is by induction on the structure of $\itF$.\\
\textit{Base case}:
\begin{itemize}\itemsep6pt
\item $\itF = \itB$. It follows that $(\sigma,\sigma')$ is of shape
$(b, b)$ where $b\in \itB$, therefore
$\algspec \cup \efr{\RG} \vdashInd \efr{b} = \efr{b}$ holds
by reflexivity.

\item $\itF = \id$. In this case $(\sigma, \sigma') \in cl(\RId) = \overline{\id}(cl(\RId))$,
so the result follows immediately by Lemma~\ref{lm:idir}.
\end{itemize}
\textit{Induction step}:
\begin{itemize}\itemsep6pt
\item ${\itF} = {\itF}_1 \times {\itF}_2$. Obviously,
$\sigma = < \sigma_1, \sigma_2>$ and $\sigma' = < \sigma'_1, \sigma'_2>$,
where $(\sigma_1, \sigma'_1) \in \overline{{\itF}_1}(cl(\RId))$
and $(\sigma_2, \sigma'_2) \in \overline{{\itF}_2}(cl(\RId))$.
Therefore, by the induction hypothesis, both
$\algspec \cup \efr{\RG} \vdashInd \efr{\sigma_1\phantom{\hspace{-2.3ex}\sigma'_{1}}} = \efr{\sigma'_1}$
and
$\algspec \cup \efr{\RG} \vdashInd \efr{\sigma_2\phantom{\hspace{-2.3ex}\sigma'_{2}}} = \efr{\sigma'_2}$ hold.
Hence, according to the definition of $\vdashInd$ (see~(\ref{rl:times})),
we conclude that
$\algspec \cup \efr{\RG} \vdashInd \efr{<\sigma_1, \sigma_2>} = \efr{<\sigma'_1, \sigma'_2>}$ holds.

\item The cases
${\itF} = {\itF}_1 \myplus {\itF}_2$, ${\itF} = {\itF}_1^{\it A}$ and ${\itF} = \pow{\itF'}$
are handled in a similar way.
\end{itemize}

\item {$``\Leftarrow"$.}
We proceed also by induction on the structure of $\itF$. Moreover, recall that the observations in Remark~\ref{rem:eqRed} hold (for each of the subsequent cases).
\\
\textit{Base case}:
\begin{itemize}\itemsep6pt
\item $\itF = \Bl$. In this case $(\sigma, \sigma')$ is of shape
$(b, b')$, where $b, b'$ are two elements of the
semilattice $\Bl$.
Also, recall that $\Gf \not = \Bl$, therefore, the equations (of type $\GtrlG \not = \F(\Exp_\Gf)$) in $\R$ are not involved in the equational reasoning. We deduce that $\efr{b\phantom{\hspace{-1.3ex}b'}} = \efr{b'}$
is proved by reflexivity, hence $(b, b') = (b, b) \in \overline{\Bl}(\cl{\RId})$.

\item $\itF = \id$. Note that for this case, $\sigma, \sigma'$ are expressions of the same type with the expressions in $\R$. We further identify two possibilities:
\begin{itemize}\itemsep2pt
\item $\efr{\sigma\phantom{\hspace{-1.7ex}\sigma'}} = \efr{\sigma'}$ is proved by reflexivity,
therefore $(\sigma, \sigma') \in \{(\eps, \eps) \mid \eps : \GtrlG\} \subseteq
\RId \subseteq \cl{\RId} = \overline{\id}(\cl{\RId})$.

\item the equations in
$\efr{\RG}$ are used in the equational reasoning
$\algspec \cup \efr{\RG} \vdashInd \efr{\sigma\phantom{\hspace{-1.7ex}\sigma'}} = \efr{\sigma'}\,$.
In addition, the freezing operator inhibits contextual reasoning, therefore $\efr{\sigma\phantom{\hspace{-1.7ex}\sigma'}} = \efr{\sigma'}$ is proved according to the equations in $\efr{\R}$, based on the symmetry and transitivity of $\vdashInd$.
In other words, $(\sigma, \sigma') \in \cl{\RId} = \overline{\id}(\cl{\RId})$.

\end{itemize}
\end{itemize}
\textit{Induction step}:
\begin{itemize}\itemsep6pt
\item ${\itF} = {\itF}_1 \times {\itF}_2$.
Obviously, due to their type, the equations in $\R$ are not involved in the equational reasoning. Also, recall that (*) holds.
Therefore, $\algspec \cup \efr{\RG} \vdashInd \efr{<\sigma_1, \sigma_2>} = \efr{<\sigma'_1, \sigma'_2>}$ is a consequence of the inverted rule~(\ref{rl:times}). More explicitly, it follows that 
$\algspec \cup \efr{\RG} \vdashInd \efr{\sigma_1\phantom{\hspace{-2.3ex}\sigma'_{1}}} = \efr{\sigma'_1}$
and $\algspec \cup \efr{\RG} \vdashInd \efr{\sigma_2\phantom{\hspace{-2.3ex}\sigma'_{2}}} = \efr{\sigma'_2}$ must hold.
By the induction hypothesis, we deduce that
$(\sigma_1, \sigma'_1) \in \overline{\F}_1(\cl{\RId})$
and $(\sigma_2, \sigma'_2) \in \overline{\F}_2(\cl{\RId})$.
So by the definition of $\overline{{\itF}_1 \times {\itF}_2}$
we conclude that
$(\langle \sigma_1, \sigma_2 \rangle, \langle \sigma'_1, \sigma'_2 \rangle) =
(\sigma, \sigma') \in \overline{{\itF}_1 \times {\itF}_2}(\RG)$.

\item The cases
${\itF} = {\itF}_1 \myplus {\itF}_2$, ${\itF} = ({\itF}_1)^{\it A}$
and ${\itF} = \pow{\itF'}$
follow a similar reasoning.
\end{itemize}
\end{itemize}

\item Proof of Theorem~\ref{thm:i}.$b)$. It follows immediately by the definition of $\overline{\Bl}$ and Remark~\ref{rem:eqRed}.
\end{itemize}
\qed
\end{proof}

\begin{corollary}\label{cor:ii}
Let $\Gf$ be a non-deterministic functor and 
$\RG$ a binary relation on the set of $\Gf$-expressions.
\begin{itemize}\itemsep2pt
\item[a)] If $\Gf$ is not a constant functor, then
$\cl{\RId}$ is a bisimulation iff
$\algspec \cup \efr{\RG} \vdashInd \efr{\delta_{\GtrlG}(\RG)}$;

\item[b)] If $\Gf$ is a constant functor $\Bl$, then
$\cl{\RId}$ is a bisimulation iff
$\algspec \vdashInd \efr{\delta_{\GtrlG}(\RG)}$.
\end{itemize}
\end{corollary}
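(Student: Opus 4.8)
The plan is to obtain Corollary~\ref{cor:ii} from Theorem~\ref{thm:i}, instantiated at the ingredient $\F=\Gf$ (which is always an ingredient of itself), after one preliminary reduction. Recall from Definition~\ref{def:bis} that $\cl{\RId}$ is a bisimulation precisely when $(\delta_{\GtrlG}(s),\delta_{\GtrlG}(t))\in\overline{\Gf}(\cl{\RId})$ for \emph{every} pair $(s,t)\in\cl{\RId}$, whereas the right-hand side of the corollary only refers to the pairs in $\RG$. So the first step I would carry out is to show that it suffices to verify this condition (still with target $\overline{\Gf}(\cl{\RId})$) on the generating pairs $\RG$: the implication from ``all of $\cl{\RId}$'' to ``only $\RG$'' is immediate since $\RG\subseteq\cl{\RId}$, and for the converse I would argue by induction on how $\cl{\RId}$ is built from $\RId$ by reflexive, symmetric and transitive closure, noting first that all components of $\cl{\RId}$ are $\Gf$-expressions, so $\delta_{\GtrlG}$ is defined on them. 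For the diagonal pairs in $\RId$ one uses that $\overline{\Gf}$ sends the diagonal of $\Exp_\Gf$ to the diagonal of $\Gf(\Exp_\Gf)$ and is monotone, so $\overline{\Gf}(\cl{\RId})$ contains the latter diagonal; for the symmetric and transitive composites one uses that $\overline{\Gf}(\cl{\RId})$ is itself symmetric and transitive. Reflexivity and symmetry of $\overline{\Gf}(\cl{\RId})$ hold for any functor, while its transitivity is exactly where $\Gf$ must preserve weak pullbacks.

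Once this reduction is available, the rest is bookkeeping. For part (a): since $\delta_{\GtrlG}(s),\delta_{\GtrlG}(t)\in\Gf(\Exp_\Gf)$, applying Theorem~\ref{thm:i}.$a)$ with $\F=\Gf$, $\sigma=\delta_{\GtrlG}(s)$ and $\sigma'=\delta_{\GtrlG}(t)$ shows, for each $(s,t)\in\RG$, that $(\delta_{\GtrlG}(s),\delta_{\GtrlG}(t))\in\overline{\Gf}(\cl{\RId})$ holds iff $\algspec\cup\efr{\RG}\vdashInd\efr{\delta_{\GtrlG}(s)}=\efr{\delta_{\GtrlG}(t)}$; taking the conjunction over all $(s,t)\in\RG$ and unfolding the abbreviations $\delta_{\GtrlG}(\eps=\eps')$, the freezing of a set of equations, and $\algspec\cup\efr{\RG}\vdashInd\R$ introduced before the statement, the right-hand side becomes exactly $\algspec\cup\efr{\RG}\vdashInd\efr{\delta_{\GtrlG}(\RG)}$, which together with the first step is the claim of (a). For part (b): $\Gf=\Bl$ is constant, so $\overline{\Bl}(R)$ is the identity relation on $\Bl$ for every $R$; the preliminary reduction then only needs reflexivity, symmetry and transitivity of equality, and Theorem~\ref{thm:i}.$b)$ replaces the pointwise condition by $\algspec\vdashInd\efr{\delta_{\GtrlG}(s)}=\efr{\delta_{\GtrlG}(t)}$ --- note that the hypotheses $\efr{\RG}$ do not appear, as the statement of (b) requires --- so the conjunction over $\RG$ yields (b).

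The only genuinely non-routine point, and hence the step I expect to be the main obstacle, is the transitivity of $\overline{\Gf}(\cl{\RId})$ used in the preliminary reduction. Here I would appeal to the fact that each basic non-deterministic functor --- the identity, constant join-semilattices, products, the biased sum $\myplus$, exponents $(-)^A$, and the finite powerset $\pow$ --- preserves weak pullbacks, and that this property is preserved under composition, so every $\Gf$ generated by the grammar~(\ref{eq:fun-gram}) preserves weak pullbacks; this is the very same property that already makes $\sim_\Gf$ an equivalence relation, as noted in Section~\ref{sec:prelim}. Everything else is either a textbook property of relation lifting or a direct instantiation of Theorem~\ref{thm:i}. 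Throughout I would keep Remark~\ref{rem:eqRed} in force, reading each $\delta_{\GtrlG}(s)$ as the structured expression it reduces to, which is the shape to which Theorem~\ref{thm:i} literally applies.
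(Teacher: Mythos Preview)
Your proof is correct, but the paper takes a different and somewhat more economical route. You carry out the reduction from $\cl{\RId}$ to $\RG$ on the \emph{coalgebraic} side before invoking Theorem~\ref{thm:i}, which forces you to argue that $\overline{\Gf}(\cl{\RId})$ is itself an equivalence relation --- and for transitivity this needs weak pullback preservation of $\Gf$. The paper instead applies Theorem~\ref{thm:i} first, to \emph{every} pair in $\cl{\RId}$, obtaining the intermediate equivalence
\[
\cl{\RId}\text{ is a bisimulation}\ \Longleftrightarrow\ \algspec\cup\efr{\RG}\vdashInd\efr{\delta_{\GtrlG}(\cl{\RId})},
\]
and only then reduces the right-hand side to $\algspec\cup\efr{\RG}\vdashInd\efr{\delta_{\GtrlG}(\RG)}$. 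This last reduction lives entirely on the algebraic side: since $\vdashInd$ extends $\vdash$, the derivable frozen equality is already reflexive, symmetric and transitive, so the equations $\efr{\delta_{\GtrlG}(\eps)}=\efr{\delta_{\GtrlG}(\eps')}$ for $(\eps,\eps')\in\cl{\RId}$ follow by induction on the closure from those for $(\eps,\eps')\in\RG$ together with the diagonal case (reflexivity). No property of $\overline{\Gf}$ beyond Theorem~\ref{thm:i} is used, and in particular weak pullback preservation never enters. What your approach buys is a cleaner separation of concerns (the coalgebraic reduction is of independent interest); what the paper's approach buys is that the only ``non-routine'' ingredient is Theorem~\ref{thm:i} itself, and the corollary becomes a three-line chain of equivalences.
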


\begin{proof}
\verb##
\begin{itemize}\itemsep2pt
\item Proof of Corollary~\ref{cor:ii}.$a)$. We reason as follows:
\[
\begin{array}{lclr}
&&\cl{\RId} \text{ is a bisimulation } \\[2ex]&\Leftrightarrow& (\forall (\eps, \eps') \in \cl{\RId}).((\delta_{\GtrlG}(\eps),
\delta_{\GtrlG}(\eps')) \in \overline{\mathscr G}(\cl{\RId}) & {(\textnormal{Def.~\ref{def:bis}})}\\[2ex]&\Leftrightarrow& \algspec \cup \efr{\RG} \vdashInd \efr{\delta_{\GtrlG}(\cl{\RId})} & {(\textnormal{Thm.~\ref{thm:i}})}\\[2ex]
&\Leftrightarrow&
\algspec \cup \efr{\RG} \vdashInd \efr{\delta_{\GtrlG}(\RG)}&{(\cl{\RId}, \vdashInd)}
\end{array}
\]

\item Proof of Corollary~\ref{cor:ii}.$b)$. It follows immediately by the definition of bisimulation relations and according to the observations in Remark~\ref{rem:eqRed}.
\end{itemize}
\qed
\end{proof}


In Figure~\ref{fig:CoVsAlg} we briefly summarize the results of the current section, namely, the algebraic encoding of the coalgebraic setting presented in~\cite{brs_lmcs}.

\begin{figure}[h]
\centering
\renewcommand{\arraystretch}{1.3}
\begin{tabular}{|c|cr|}
\hline
coalgebraic & algebraic &\\
\hline\hline

$\vdash \eps \ct \FtrlG$ & $\algspec \vdash \eps \ct \FtrlG = {\it true}$&\\
\hline

${\Exp_{\FtrlG}}$
&
\multicolumn{2}{|c|}{$\{\eps \in {\mathcal T}_{\Sigma, \Exp\!\!} \mid \algspec \vdash \eps \ct \FtrlG = {\it true}\}$}\\
\hline

${\Exp_{\Gf}}$
&
\multicolumn{2}{|c|}{$\{\eps \in {\mathcal T}_{\Sigma, \Exp\!\!} \mid \algspec \vdash \eps \ct \GtrlG = {\it true}\}$}\\
\hline

${\F(\Exp_\Gf)}$ &
\multicolumn{2}{|c|}{$\{\sigma \in {\mathcal T}_{\Sigma, \ExpS\!\!} \mid
\algspec \vdash \sigma\in \F({\Exp}\,\Gf) = {\it true} \}$}\\
\hline

$\delta_{\FtrlG} \ct {\Exp_{\FtrlG}} \rightarrow {\F(\Exp_\Gf)}$ &
\multicolumn{2}{|c|}{
$\delta \_ ( \_ ) \ct {\Ingred~\Exp \rightarrow \ExpS}$}\\
\hline

&
$
\begin{array}{c}
\algspec \vdash \sigma\in \F({\Exp}\,\Gf) = {\it true},\\
\algspec \vdash \sigma'\in \F({\Exp}\,\Gf) = {\it true}
\end{array}
$
&\\

$(\sigma, \sigma') \in \overline{\F}(\cl{\RId})$ &
$\algspec \cup \efr{\R} \vdashInd \efr{\sigma\phantom{\hspace{-1.8ex}\sigma'}} = \efr{\sigma'}$\,\, if $\Gf \not= \Bl$ &\\

& or &\\

& $\algspec \vdashInd \efr{\sigma\phantom{\hspace{-1.8ex}\sigma'}} = \efr{\sigma'}$\,\, if $\Gf = \Bl$ & (Thm.~\ref{thm:i})\\

\hline


$\cl{\RId}$ is a bisimulation &
$\algspec \cup \efr{\R} \vdashInd \efr{\delta_{\GtrlG}(\R)}$\,\, if $\Gf \not= \Bl$&\\

& or &\\

& $\algspec \vdashInd \efr{\delta_{\GtrlG}(\R)}$\,\, if $\Gf = \Bl$&(Cor.~\ref{cor:ii})\\
\hline
\end{tabular}
\caption{non-deterministic functors - coalgebraic vs. algebraic approach}
\label{fig:CoVsAlg}
\end{figure}

\section{A Decision Procedure for Bisimilarity in \CIRC}
\label{sec:dec-proced}

In this section, we describe how the coinductive theorem prover {\CIRC}~\cite{lucanu-etal-2009-calco} can be used to implement the decision procedure for the bisimilarity of generalized regular expressions, which we discussed above.

{\CIRC} can be seen as an extension of Maude
with behavioral features and its implementation
is derived from that of Full-Maude.
In order to use the prover, one needs to provide a specification (a {\CIRC} theory)
and a set of goals.
A {\CIRC} theory ${\cal B}=(S, (\Sigma,\Delta),(E,{\cal I}))$ consists of an algebraic specification $(S,\Sigma,E)$, a set $\Delta$ of \emph{derivatives}, and a set $\cal I$ of equational interpolants, which are expressions of the form $e \Rightarrow \{ e_i \mid i \in I \}$ where $e$ and $e_i$ are equations.
\ans{C.3.26}
The intuition for this type of expressions is simple: $e$ holds whenever for any $i$ in $I$ the equation $e_i$ holds. In other words, to prove $E \vdash e$ one can chose to instead prove $E \vdash \{e_i \mid i \in I\}$.  For the particular case of non-deterministic functors, we use equational interpolants to extend the initial entailment relation in a consistent way with rules~(\ref{rl:times})--(\ref{rl:pow}).
(For more information on equational interpolants see \cite{acca}). A derivative $\delta\in\Delta$ is a $\Sigma$-term containing a special variable ${*}{:}s$ \ans{C.1.20} ({\it i.e.}, a $\Sigma$-context), where $s$ is the sort of the variable $*$. If $e$ is an equation $t=t'$ with $t$ and $t'$ of sort $s$, then $\delta[e]$ is $\delta[t/{*}{:}s]=\delta[t'/{*}{:}s]$.
We call this type of equation a \emph{derivable equation}. The other equations are \emph{non-derivable}. We write $\delta[\R]$ to represent $\{\delta[e] \mid e \in \R\}$, where $\R$ is a set of derivable equations, and $\Delta[e]$ for the set $\{\delta[e]\mid\delta\in\Delta{\rm ~appropriate~for~}e\}$.

\ans{C1.11} Moreover, note that {\CIRC} works with an extension of the entailment relation $\vdash$ over frozen equations (introduced in Section~\ref{sec:algSpec}), with two more axioms, as in~\cite{rosu-lucanu-2009-calco}:
\begin{equation}
E \cup \R \vdash \efr{e}\,\,\,\, \textnormal{\it{iff}}\,\,\, E \vdash e \label{A1}\\
\end{equation}
\begin{equation}
E \cup \R \vdash \G\,\,\,\, \it{implies}\,\,\, E \cup \delta[\R] \vdash \delta[\G] \textnormal{\it{ for each }} \delta \in \Delta \label{A2}
\end{equation}

Above, $E$ ranges over unfrozen equations, $e$ over non-derivable unfrozen equations, and $\R,\G$ over derivable frozen equations.

\begin{remark}
Note that the new entailment $\vdashInd$ extended over frozen equations (in Definition~\ref{def:PF}) satisfies the assumptions~(\ref{A1}) and~(\ref{A2}).
\end{remark}

{\CIRC} implements the coinductive proof system
given in~\cite{rosu-lucanu-2009-calco} using a set of reduction rules
of the form
$({\cal B}, {\cal F}, {\cal G}) \Rightarrow ({\cal B}, {\cal F'}, {\cal G'})$, where
${\cal B}$ represents a specification,
${\cal F}$ is the coinductive hypothesis (a set of frozen equations) and
${\cal G}$ is the current set of goals. The freezing operator is defined
as described in Section~\ref{sec:algSpec}.
Here is a brief description of these rules:
\begin{itemize}\itemsep 5pt
\item[]
\textsf{[Done]}:
$({\cal B}, {\cal F}, \{\})
\Rightarrow
\cdot
$\\
Whenever the set of goals is empty, the system terminates with success.
\item[]
\textsf{[Reduce]}:
$({\cal B}, {\cal F}, {\cal G} \cup \{\efr{e}\})
\Rightarrow
({\cal B}, {\cal F}, {\cal G})
\textit{\;if\;} {\cal B} \cup {\cal F} \vdash \efr{e}
$\\
If the current goal is a $\vdash$-consequence
of ${\cal B} \cup {\cal F}$ then $\efr{e}$ is removed from
the set of goals.
\item[]
\textsf{[Derive]}:
$({\cal B}, {\cal F}, {\cal G} \cup \{\efr{e}\})
\Rightarrow
({\cal B}, {\cal F} \cup \{\efr{e}\}, {\cal G} \cup \efr{\Delta[e]})$
$\textit{if\;} {\cal B} \cup {\cal F}\, {\not \vdash}\, \efr{e} $\\
When the current goal $e$ is derivable and it is not a
$\vdash$-consequence, it is added
to the hypothesis and its derivatives to the set of goals.
\ans{C.1.21}

\item[]
\textsf{[Simplify]}:
$({\cal B}, {\cal F}, {\cal G} \cup \{\efr{\theta(e)}\})
\Rightarrow
({\cal B}, {\cal F}, {\cal G} \cup \{\efr{\theta(e_i)} \mid i \in I\})$
\\
\verb##{\hspace{38pt}}
$\textit{\;if\;}  e  \Rightarrow \{ e_i \mid i \in I \}$
is an equational interpolant
from the\\
\verb##{\hspace{41pt}}
specification and $\theta \ct X \rightarrow {\mathcal T}_{\Sigma}(Y)$ is a substitution. \ans{C.2.12}
\item[]
\textsf{[Fail]}:
$({\cal B}, {\cal F}, {\cal G} \cup \{\efr{e}\})
\Rightarrow
{\it failure} \textit{\;if\;} {\cal B} \cup {\cal F} \, {\not \vdash}\, \efr{e} \,\land\,
e$ {\it{is non-derivable}}\\
This rule stops the reduction process  with failure whenever the current goal
$e$ is non-derivable and is not a $\vdash$-consequence of ${\cal B} \cup {\cal F}$ \ans{C.2.13}.
\end{itemize}

It is worth noting that there is a strong connection between a {\CIRC} proof and the construction of a bisimulation relation. We illustrate this fact and the importance of the freezing operator with a simple example.

\begin{example}\label{ex:streams}
Consider the case of infinite streams. The set $\itB^\omega$ of infinite streams over a set $\itB$ is the final coalgebra of the functor $\Rf = \it \itB \times \id$, with a coalgebra structure given by {\it hd} and {\it tl}, the functions that return the head and the tail of the stream, respectively. Our purpose is to prove that $0^\infty = (00)^\infty$. Let $z$ and $zz$ represent the stream on the left hand side and, respectively, on the right hand side. These streams are defined by the equations: ${\it hd}(z) = 0, {\it tl}(z) = z, {\it hd}(zz) = 0, {\it tl}(zz) = 0:zz$. 
Note that equations over $\itB$ like ${\it hd}(z) = 0$ are not derivable and equations over streams like ${\it tl}(z) = z$ are derivable.

In Fig.~\ref{fig:parallel} we present the correlation between the {\CIRC} proof and the construction of the bisimulation relation. Note how {\CIRC} collects the elements of the bisimulation as frozen hypotheses.

\begin{figure}[h]
\centering
\renewcommand{\arraystretch}{1.9}
\begin{tabular}{|c|c|}
\hline
{\CIRC} proof & Bisimulation construction \\
\hline\hline

\raisebox{-7pt}{\code{(add goal z = zz .)}} &
$\xymatrix@R=.2cm@C=1cm{
*+[o][F]{z}\ar@(dr,ur) \ar@{=>}[d] &
*+[o][F]{zz}\ar@/^/[r] \ar@{=>}[d]&
*+[o][F]{(zz)'}\ar@/^/[l] \ar@{=>}[d]\\
0 & 0 & 0
}$\\
\hline
$({\cal B}, \{\}, \{\efr{z} = \efr{zz}\})$ &
${\cal F} = \{\}; \,\, z \sim zz~?$
\\
\hline

\raisebox{-4pt}{$
\mathrel{{\overset{\textsf{[Derive]}}{\longrightarrow}}}
\left({\cal B}, \{\efr{z} = \efr{zz}\},
\left\{{\efr{{\it hd}(z)} = \efr{{\it hd}(zz)}} \atop
{\efr{{\it tl}(z)} = \efr{{\it tl}(zz)}}\right\}\right)
$}
&
\raisebox{-4pt}
{$
{\cal F} = \{(z, zz)\};
{
\,\, z \mathrel{{\overset{0}{\longrightarrow}}} z
\atop
\,\, zz \mathrel{{\overset{0}{\longrightarrow}}} (zz)'
}
$}
\\
\hline
$
\mathrel{{\overset{\textsf{[Reduce]}}{\longrightarrow}}}
({\cal B}, \{\efr{z} = \efr{zz}\},
\{\efr{z} = \efr{0 : zz}\})
$
&
{
$
{\cal F} = \{(z, zz)\};
\,\, z \sim (zz)'~?
$
}
\\
\hline

\raisebox{-4pt}{$
\mathrel{{\overset{\textsf{[Derive]}}{\longrightarrow}}}
\left({\cal B},
\left\{{\efr{z} = \efr{zz}} \atop
{\efr{z} = \efr{0 : zz}}\right\}
,
\left\{{\efr{{\it hd}(z)} = \efr{{\it hd}(0:zz)}} \atop
{\efr{{\it tl}(z)} = \efr{{\it tl}(0:zz)}}\right\}\right)
$}
&
\raisebox{-3pt}{$
{\cal F} = \{(z, zz), (z, (zz)')\};
{
\,\, z \mathrel{{\overset{0}{\longrightarrow}}} z
\atop
\,\, (zz)' \mathrel{{\overset{0}{\longrightarrow}}} zz
}
$}
\\
\hline
\raisebox{-3pt}{$\mathrel{{\overset{\textsf{[Reduce]}}{\longrightarrow}}}
\left({\cal B},
\left\{{\efr{z} = \efr{zz}} \atop
{\efr{z} = \efr{0 : zz}}\right\}
,
\{\}\right)
$}
&
$
{\cal F} = \{(z, zz), (z, (zz)')\}~\checkmark
$
\\
\hline
\end{tabular}
\caption{Parallel between a {\CIRC} proof and the bisimulation construction}
\label{fig:parallel}
\end{figure}

Let us analyze what would happen if the freezing operator $\efr{-}\,$ were not used. Suppose the circular coinduction algorithm would add the equation $z = zz$ in its unfrozen form to the hypotheses. After applying the derivatives we obtain the goals
${\it hd}(z) = {\it hd}(zz), {\it tl}(z) = {\it tl}(zz)$.
At this point, the prover could use the freshly added equation \ans{C.1.22} $z = zz$, and according to the congruence rule, both goals would be proven directly, though we would still be in the process of showing that the hypothesis holds. By following a similar reasoning, we could \ans{C.1.23} also prove that $0^\infty = 1^\infty$! In order to avoid these situations, the hypotheses are frozen, ({\it i.e.}, their sort is changed from {\sf Stream} to {\sf Frozen}) and this stops the application of the congruence rule, forcing the application of the derivatives according to their definition in the specification. Therefore, the use of the freezing operator is vital for the soundness of circular coinduction.

\end{example}

Next, we focus on using {\CIRC} for automatically reasoning on the equivalence of $\Gf$-expressions. As we will show, the implementation of both
the algebraic specifications associated to non-deterministic functors
and the equational entailment relation described
in Section~\ref{sec:algSpec} is immediate.
Given a non-deterministic functor $\Gf$, we define a {\CIRC} theory
$\behspec=(S, (\Sigma,\Delta),(E,{\cal I}))$ as follows:\\[-4ex]
\begin{itemize}
\item $(S,\Sigma,E)$ is $\algspec$
\item $\Delta=\{\delta_{\GtrlG}(*{:}\Exp)\}${, so the only derivable equations are those of sort \texttt{Exp}. As we have already seen for the example of streams, equations of sort \texttt{Slt} must not be derivable. Since we have the subsort relation $\tt Slt \ls  Exp$, we avoid the application of the derivative $\delta_{\GtrlG}(*{:}\Exp)$ over equations of sort \texttt{Slt} by means of an interpolant (see below).}
\item $\cal I$ consists of the following equational interpolants \ans{C.1.25}, whose role is to replace current proof obligations 
{over non-trivial structures with simpler ones:}
\begin{align}
\langle  \sigma_1, \sigma_2  \rangle  =  \langle  \sigma'_1, \sigma'_2  \rangle \,\, \Rightarrow \,\, &
\,\{ \sigma_1 = \sigma'_1,\,\, \sigma_2 = \sigma'_2 \} \label{srl:times}\\
k_i(\sigma) = k_i(\sigma') \,\, \Rightarrow \,\, & \,\{\sigma = \sigma'\} \label{srl:plus}\\
 f = g \,\, \Rightarrow \,\, & \,\{ f(a) = g(a)\mid a \in A\} \label{srl:expo}\\
 \cup_{i\in\overline{1,n}}\{\sigma_i\} = \cup_{j\in\overline{1,m}}\{\sigma'_j\}
\,\, \Rightarrow \,\, & \,\{ \land_{i\in\overline{1,n}}(\lor_{j\in\overline{1,m}}\, \sigma_i = \sigma'_j)\notag\\
& \hspace{4.5pt}\land_{j\in\overline{1,m}}(\lor_{i\in\overline{1,n}}\, \sigma_i = \sigma'_j)\}
\label{srl:pow}
\end{align}
{
together with an equational interpolant
\begin{align}
& t=t' \,\, \Rightarrow \,\, \{t\simeq t'=\tt true\}\label{srl:slt}
\end{align}
where $\simeq$ is the equality predicate equationally defined over the sort \texttt{Slt}. The last interpolant transforms the equations of sort \texttt{Slt} from derivable (because of the subsort relation $\tt Slt\ls Exp$) into non-derivable and equivalent ones.\\ 
}
\end{itemize}


The interpolants 
{(\ref{srl:times}--\ref{srl:slt})}
in $\cal I$ extend the entailment relation $\vdashInd$ (introduced in Definition~\ref{def:PF}) as follows:
\[
\dfrac{E\vdashInd \{e_i\mid i\in I\}}
         {E\vdashInd e}~{\rm if~}e\Rightarrow\{e_i\mid i\in I\}{\rm~in~}{\cal I}
\]

\begin{theorem}[Soundness]\label{thm:soundnessCirc}
Let $\Gf$ be a non-deterministic functor, and $\G$ a binary relation on the
set of $\mathscr G$-expressions.\\
If $({\behspec}, {\mathcal F}_0 = \{\}, \G_0 = \efr{\G}) \rTrans{*}
({\behspec}, {\mathcal F}_n, \G_n = \{\})$
using \textsf{[Reduce]}, \textsf{[Derive]} and \textsf{[Simplify]},
then $\G \subseteq \sim_{\Gf}$.
\end{theorem}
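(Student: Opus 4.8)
The plan is to connect the operational behavior of the \CIRC\ proof system with the coalgebraic notion of bisimulation established in Corollary~\ref{cor:ii}, and then invoke that corollary. Assume first that $\Gf$ is not a constant functor (the constant case is handled by the same argument using Corollary~\ref{cor:ii}.$b)$ and Theorem~\ref{thm:i}.$b)$). The key invariant I would prove, by induction on the number of reduction steps $k$ in the derivation $(\behspec,\mathcal F_0,\G_0)\rTrans{k}(\behspec,\mathcal F_k,\G_k)$, is the following: writing $\mathcal F_k=\efr{\mathcal H_k}$ for a set of (unfrozen) equations $\mathcal H_k$ between $\Gf$-expressions, we have $\G\subseteq\cl{(\mathcal H_k)_{\it id}}$, and every goal appearing in $\G_k$ is of the form $\efr{\delta_{\GtrlG}(\eps)}=\efr{\delta_{\GtrlG}(\eps')}$ (or a component thereof obtained by \textsf{[Simplify]}) for some $(\eps,\eps')$ that has already been added to $\mathcal H_k$ or is about to be. In other words: \CIRC\ never invents a pair that is not forced by applying $\delta_{\GtrlG}$ to a pair already in the hypotheses.

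Next I would analyze each of the three permitted rules. \textsf{[Derive]} applied to a frozen goal $\efr{\eps}=\efr{\eps'}$ of sort \texttt{Exp} adds $(\eps,\eps')$ to the hypotheses and puts $\efr{\Delta[\eps=\eps']}=\{\efr{\delta_{\GtrlG}(\eps)}=\efr{\delta_{\GtrlG}(\eps')}\}$ into the goal set; this is exactly the ``close under transitions'' step of the bisimulation construction, and it preserves the invariant. \textsf{[Reduce]} removes a goal that is already a $\vdashInd$-consequence of $\behspec\cup\mathcal F_k$; by Theorem~\ref{thm:i}.$a)$ this means precisely that the corresponding pair of structured expressions already lies in $\overline{\Gf}(\cl{(\mathcal H_k)_{\it id}})$, so discarding it loses no obligation. \textsf{[Simplify]} replaces a structured-expression goal by the goals dictated by one of the interpolants~(\ref{srl:times})--(\ref{srl:slt}); since these interpolants were designed to mirror exactly the inference rules~(\ref{rl:times})--(\ref{rl:pow}) defining $\vdashInd$, this step neither strengthens nor weakens the set of obligations modulo $\vdashInd$-provability. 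Combining these observations, since the run terminates with $\G_n=\{\}$, all derivatives generated along the way are $\vdashInd$-consequences of $\behspec\cup\efr{\mathcal H_n}$, which by Theorem~\ref{thm:i}.$a)$ gives $\behspec\cup\efr{\mathcal H_n}\vdashInd\efr{\delta_{\GtrlG}(\mathcal H_n)}$. Hence $\algspec\cup\efr{\mathcal H_n}\vdashInd\efr{\delta_{\GtrlG}(\mathcal H_n)}$ (using that the interpolants only extend $\vdashInd$ conservatively on the relevant equations), so by Corollary~\ref{cor:ii}.$a)$ the relation $\cl{(\mathcal H_n)_{\it id}}$ is a bisimulation. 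Since $\G\subseteq\cl{(\mathcal H_n)_{\it id}}$ and bisimilarity contains every bisimulation, we conclude $\G\subseteq{\sim_\Gf}$.

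The main obstacle I anticipate is the careful bookkeeping around \textsf{[Simplify]} and the interplay between the two entailment relations: one must check that applying the interpolants and then \textsf{[Reduce]} on the simplified goals is equivalent, as far as $\vdashInd$-provability is concerned, to discharging the original structured goal via the rules~(\ref{rl:times})--(\ref{rl:pow}), and in particular that the non-invertible rule~(\ref{rl:pow}) for the powerset functor is matched faithfully by the interpolant~(\ref{srl:pow}) together with the fact that $\vdashInd$ only proves such a conclusion through an instance of that rule. A secondary subtlety is the sort discipline: one must ensure that the derivative $\delta_{\GtrlG}(*{:}\Exp)$ is never applied to equations of sort \texttt{Slt}, which is precisely the purpose of interpolant~(\ref{srl:slt}); in the non-constant case the constant-functor leaves are handled through Theorem~\ref{thm:i}.$b)$ rather than by deriving, and this case split must be threaded consistently through the induction. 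Once these points are settled, the soundness statement follows cleanly from Corollary~\ref{cor:ii}.
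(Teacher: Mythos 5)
Your proposal follows essentially the same route as the paper: collect the hypotheses $\mathcal F$ accumulated during the run, show by induction over the proof steps (with case analysis on \textsf{[Reduce]}, \textsf{[Derive]}, \textsf{[Simplify]}) that all generated derivative goals are $\vdashInd$-consequences of $\behspec\cup\efr{\mathcal F}$, and then invoke Corollary~\ref{cor:ii} to conclude that $\cl{\mathcal F_{\it id}}$ is a bisimulation containing $\G$; the paper runs the induction backwards from the empty final goal set rather than forwards with an invariant, but this is cosmetic. One wording fix: your invariant ``$\G\subseteq\cl{(\mathcal H_k)_{\it id}}$ at every step $k$'' is literally false at $k=0$ (where $\mathcal H_0=\emptyset$) and should be restated as ``each pair of $\G$ is either already in $\mathcal H_k$, still pending in $\G_k$, or a $\vdashInd$-consequence of $\behspec\cup\mathcal F_k$,'' which at $k=n$ yields the containment you need.
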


\begin{proof}
The idea of the proof is to find a bisimulation relation
$\widetilde{\cal F}$ s.t. $\G
\subseteq \widetilde{\cal F}$.\\
First let $\mathcal{F}$ represent the set of hypotheses (or derived goals) collected during the proof session.
We distinguish between two cases:

\begin{itemize}\itemsep6pt
\item[a)] $\Gf = \Bl$.
For this case, the set of expressions in $\G$ is given by the following grammar:
\begin{equation}
\eps\,::= \emp \mid b \mid \eps \oplus \eps \mid \mu x . \eps\,.
\label{eq:const}
\end{equation}

{
Note that the goals $\eps = \eps'$ in $\G$ are proven
\begin{enumerate}
\item either according to \textsf{[Simplify]}, applied in the context of the equational interpolant~(\ref{srl:slt}). If this is the case, then $\eps = \eps'$ holds by reflexivity, therefore
\begin{equation}\label{eq:s}
\behspec \vdashInd \efr{\delta_{\Bl \triangleleft \Bl}(\eps)} = \efr{\delta_{\Bl \triangleleft \Bl}(\eps')}
\end{equation}
also holds;
\item or after the application of \textsf{[Derive]}, case in which $\behspec \cup \efr{\cal F} \vdashInd \efr{\delta_{\Bl \triangleleft \Bl}(\eps)} = \efr{\delta_{\Bl \triangleleft \Bl}(\eps')}$ holds. Moreover, note that $\delta_{\Bl \triangleleft \Bl}(\eps)$ and $\delta_{\Bl \triangleleft \Bl}(\eps')$ are reduced to $b$, respectively $b' \in \Bl$, according to~(\ref{eq:const}) and the definition of $\delta_{\Bl \triangleleft \Bl}$. Consequently, the non-derivable (due to the subsort relation $\Bl\ls \tt Slt$) goal $\efr{b} = \efr{b'}$ holds by reflexivity, so the following is a sound statement:
\begin{equation}\label{eq:ss}
\behspec \vdashInd \efr{\delta_{\Bl \triangleleft \Bl}(\eps)} = \efr{\delta_{\Bl \triangleleft \Bl}(\eps')}.
\end{equation}
\end{enumerate}
Based on~(\ref{eq:s}),~(\ref{eq:ss}) and Corollary~\ref{cor:ii}.b), we conclude that $\widetilde{\cal F} = {\it cl}({\G_{\it id}})$ is a bisimulation, hence $\G \subseteq {\it cl}({\G_{\it id}}) \subseteq {\sim_\Gf}$.
}

\item[b)] $\Gf \not = \Bl$.
Based on the reduction rules implemented in {\CIRC},
it is quite easy to see that the initial set of goals $\G$ is a
$\vdashInd$-consequence of $\behspec \cup \efr{\mathcal{F}}$.
In other words, $\G \subseteq \cl{{\mathcal{F}}_{\it id}}$.
So, if we anticipate a bit, we should show that
$\widetilde{\cal F}=\cl{{\mathcal{F}}_{\it id}}$ is a bisimulation,
\textit{i.e.}, according to Corollary~\ref{cor:ii},
$\behspec \cup \efr{\mathcal{F}} \vdashInd \efr{\delta_{\GtrlG}(\mathcal{F})}$.
This is achieved by proving that
$\behspec \cup \efr{\mathcal{F}} \vdashInd \G_i (i \in \overline{0,n})$
(note that $\efr{\delta_{\GtrlG}(\mathcal{F})} \subseteq \bigcup_{i \in \overline{0,n}} \G_i$,
according to \textsf[Derive]).
The proof is by induction on $j$, where $n-j$ is the current
proof step, and by
case analysis on the {\CIRC} reduction rules applied at each step.

We further provide a sketch of the proof.\\
The \emph{base case} $j = n$ follows immediately, as $\behspec \cup \efr{\mathcal{F}} \vdashInd \G_{n} = \emptyset$.\\
For the \emph{induction step} we proceed as follows. Let $\efr{e} \in \G_{j}$. If $\efr{e} \in \G_{{j+1}}$ then $\behspec \cup \efr{\mathcal{F}} \vdashInd \efr{e}$ by the induction hypothesis. If $\efr{e} \not \in \G_{{j + 1}}$ then, for example, if \textsf{[Reduce]} was applied then it holds that $\behspec \cup {\mathcal F_{j}} \vdashInd \efr{e}$. Recall that ${\mathcal F_{j}} \subseteq \efr{\mathcal F}$, so $\behspec \cup \efr{\mathcal F} \vdashInd \efr{e}$ also holds. The result follows in a similar fashion for the application of \textsf{[Derive]} or \textsf{[Simplify]}.
\end{itemize}
\qed
\end{proof}

\begin{remark}
The soundness of the proof system we describe in this paper does not follow directly from Theorem 3 in \cite{rosu-lucanu-2009-calco}.
This is due to the fact that
we do not have an experiment-based definition of bisimilarity.
So, even though the mechanism we use for proving
$\behspec \cup \efr{\mathcal{F}} \vdashInd \efr{\delta_{\GtrlG}(\mathcal{F})}$ (for the case $\Gf \not = \Bl$)
is similar to the one described in
\cite{rosu-lucanu-2009-calco},
the current soundness proof is conceived in terms of bisimulations
(and not experiments).
\end{remark}

\begin{remark}
The entailment relation $\vdashInd$ that {\CIRC} uses
for checking the equivalence of generalized regular expressions
is an instantiation of the parametric entailment
relation $\vdash$ from the proof system in \cite{rosu-lucanu-2009-calco}.
This approach allows {\CIRC} to reason automatically on a large class
of systems which can be modeled as non-deterministic coalgebras.
\end{remark}

As already stated, our final goal is to use {\CIRC} as a
decision procedure for the bisimilarity of generalized regular expressions.
That is, whenever provided a set of expressions, the prover stops
with a yes/no answer w.r.t. their equivalence.
In this context,
an important aspect is that the sub-coalgebra generated
by an expression $\eps \in {\Exp_{\Gf}}$ by repeatedly applying
$\delta_{\Gf}$ is, in general, infinite. Take for example
the non-deterministic functor \ans{C.1.26} $\Rf = \itB \times \id$ associated to infinite streams, and
consider the property
\(
\mu x . \emp \oplus r\langle x \rangle =
\mu x . r\langle x \rangle
\).
In order to prove this, {\CIRC} builds an infinite proof sequence
by repeatedly applying  $\delta_{\Rf}$ as follows:
\begin{center}
\begin{tabular}{rcl}
$\delta_{\Rf}(\mu x . \emp \oplus r\langle x \rangle)$ &
$=$ &
$\delta_{\Rf}(\mu x . r\langle x \rangle)$\\
& $\downarrow$ &\\
$\langle 0, \emp \oplus (\mu x . \emp \oplus r\langle x \rangle)\rangle$ &
$=$ &
$\langle 0, \mu x . r\langle x \rangle \rangle$\\[2ex]

$\delta_{\Rf}(\emp \oplus (\mu x . \emp \oplus r\langle x \rangle))$ &
$=$ &
$\delta_{\Rf}(\mu x . r\langle x \rangle)$\\
& $\downarrow$ &\\
$\langle 0, \emp \oplus \emp \oplus (\mu x . \emp \oplus r\langle x \rangle)\rangle$ &
$=$ &
$\langle 0, \mu x . r\langle x \rangle \rangle$ [\ldots\!]
\end{tabular}
\end{center}
\label{just:ACI}
In this case, the prover would never stop. We observed in Section~\ref{sec:dp} that Theorem~\ref{thm:kleene} guarantees we can associate a finite coalgebra to a certain expression. In the proof of the aforementioned theorem, which is presented in \cite{brs_lmcs}, it is shown 
that the axioms for associativity, commutativity and idempotence (ACI) of $\oplus$
guarantee finiteness of the generated sub-coalgebra (note that these axioms have also been proven
sound w.r.t. bisimulation).
ACI properties can easily be specified in {\CIRC}
as the prover is an extension of Maude, which has a
powerful matching modulo ACUI (ACI plus unity) capability.
The idempotence is given by the equation $\eps \oplus \eps = \eps$, and
the commutativity and associativity are specified as attributes
of $\oplus$. It is interesting to remark that for the powerset functor termination is guaranteed without the axioms, because the coalgebra structure on the expressions for the powerset functor already includes ACI (since $\pow(\Exp)$ is itself a join-semilattice).

\begin{theorem}\label{thm:decProc}
Let $\G$ be a set of proof obligations over generalized regular expressions.
{\CIRC} can be used as a decision procedure for the equivalences in $\G$,
that is, it can \ans{C.1.27} decide whenever a goal $(\E_1, \E_2) \in \G$ is a true or false equality.
\end{theorem}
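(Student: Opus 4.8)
The plan is to establish three facts: that well-typedness of the two input expressions is decidable, that the {\CIRC} reduction sequence started from $\efr{\G}$ always terminates, and that it terminates with success precisely when every equality in $\G$ holds in $\sim_\Gf$. The first is immediate: the type-checking relation of Definition~\ref{def:ts} is specified by a terminating set of equations, one per inference rule, so $\algspec \vdash \eps \ct \GtrlG = {\it true}$ is decidable by rewriting; we may therefore assume both components of each goal are well-typed $\Gf$-expressions. For the implication ``success $\Rightarrow$ truth'' there is nothing new to prove: it is exactly Theorem~\ref{thm:soundnessCirc}, which gives $\G \subseteq \sim_\Gf$ whenever the run ends via \textsf{[Done]}.

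The core of the argument is termination. Recall from Section~\ref{sec:dp} and the proof of Theorem~\ref{thm:kleene} in~\cite{brs_lmcs} that, for any $\eps \in \Exp_\Gf$, the set of expressions obtained from $\eps$ by repeatedly applying $\delta_\Gf$ is \emph{finite once expressions are identified modulo the ACI axioms of $\oplus$} (for ingredients built with $\pow$ this identification is already internal to $\delta_\Gf$, since $\pow(\Exp_\Gf)$ carries a join-semilattice structure). Since {\CIRC} extends Maude it performs all reductions modulo ACUI, so the frozen equations arising as hypotheses or goals during a run range, up to provable equality, over a \emph{finite} set: the structured expressions in $\F(\Exp_\Gf)$, for $\F$ an ingredient of $\Gf$, built over the finitely many reachable expressions. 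The guard of \textsf{[Derive]} adds a hypothesis $\efr{e}$ only when it is \emph{not} a $\vdashInd$-consequence of $\behspec \cup \efr{\mathcal F}$, so every \textsf{[Derive]} step strictly enlarges the provable closure of $\mathcal F$; by finiteness there can be only finitely many such steps, and between consecutive ones only finitely many applications of \textsf{[Reduce]}, \textsf{[Simplify]} or \textsf{[Fail]}, each of which either removes a goal or stops the run. Hence every run halts, and the only terminal states are \textsf{[Done]} (success) or \textsf{[Fail]} (failure).

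It then remains to show that \textsf{[Fail]} is reached exactly when some goal in $\G$ is a false equality. If the run halts with success, $\G \subseteq \sim_\Gf$ as above, so no goal is false. If it halts with \textsf{[Fail]}, the offending goal is a non-derivable frozen equation, which by construction of $\behspec$ is (after the interpolant~(\ref{srl:slt})) an equation between elements $b \neq b'$ of a constant semilattice $\Bl$. Tracing back the \textsf{[Derive]}/\textsf{[Simplify]} steps that produced it, this equation is a component — selected by the interpolant rules~(\ref{srl:times})--(\ref{srl:pow}), which mirror the relation lifting $\overline\Gf$ — of $\delta_\Gf(\E)=\delta_\Gf(\E')$ for some pair $(\E,\E')$ that is \textsf{[Derive]}-reachable from a pair in $\G$. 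Since $\sim_\Gf$ is a bisimulation it is closed under the transition structure: if $(\E,\E') \in \sim_\Gf$ then $(\delta_\Gf(\E),\delta_\Gf(\E')) \in \overline\Gf(\sim_\Gf)$, and by the definition of $\overline\Gf$ every such component equation between $\Bl$-elements is an identity. A straightforward induction along the reduction sequence then shows that if the original goal in $\G$ lay in $\sim_\Gf$, every \textsf{[Derive]}-reachable pair would lie in $\sim_\Gf$, forcing $b = b'$ — a contradiction. Hence that goal is a false equality and {\CIRC} correctly answers ``false''. Combining the two directions, {\CIRC} answers ``true'' iff every goal in $\G$ holds in $\sim_\Gf$, which is the asserted decision procedure.

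A word on where the difficulty concentrates. Soundness is outsourced to Theorem~\ref{thm:soundnessCirc} and the ``false'' direction is a short bisimulation-closure argument; the delicate point is the termination claim, and in particular the justification that matching modulo ACUI in Maude genuinely realises the identification of expressions modulo ACI under which the reachable sub-coalgebra becomes finite, so that the ``not a $\vdashInd$-consequence'' guard of \textsf{[Derive]} can fire only finitely often. This is also the place where the powerset case must be treated separately, relying on the fact that $\delta_\Gf$ already normalises $\pow$-components.
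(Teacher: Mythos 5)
Your proposal is correct and follows essentially the same route as the paper's own (much terser) argument: termination from the ACI-induced finiteness of the sub-coalgebra generated by $\delta_\Gf$ (citing~\cite{brs_lmcs}), the positive direction outsourced to Theorem~\ref{thm:soundnessCirc} via \textsf{[Done]}, and the negative direction read off from \textsf{[Fail]}. You merely fill in details the paper leaves implicit (the \textsf{[Derive]}-guard counting argument for termination and the back-tracing argument that a failing semilattice equation contradicts bisimilarity of the original goal), so there is no substantive divergence to report.
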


\begin{proof}
\ans{C.3.31}
Note that as proven in~\cite{brs_lmcs}, the ACI axioms for $\oplus$ guarantee that $\delta_\Gf$ is applied for a finite number of times in the generation of the sub-coalgebra associated to a $\Gf$-expression. Therefore, it straightforwardly follows that by implementing the ACI axioms in {\CIRC} (as attributes of $\oplus$), the set of new goals obtained by applying $\delta_\Gf$ is finite.
In these circumstances, whenever {\CIRC} stops according to the
reduction rule \textsf{[Done]},
the initial proof obligations are bisimilar. On the other hand, whenever it terminates
with \textsf{[Fail]}, the goals are not bisimilar.
\qed
\end{proof}

\section{A {\CIRC}-based Tool}
\label{sec:caseStudy}

We have implemented a tool that, when provided with a functor $\Gf$,
automatically generates a specification for {\CIRC} which can then be used in order to automatically check whether two $\Gf$-expressions are bisimilar. The tool is implemented as a metalanguage application in Maude. It can be downloaded from the address \url{http://goriac.info/tools/functorizer/}. In order to start the tool, one needs to launch Maude along with the extension Full-Maude and load the downloaded file using the command \code{in functorizer.maude .}

The general use case consists in providing the join-semilattices, the alphabets and the expressions. After these steps, the tool automatically checks if the provided expressions are guarded, closed and correctly typed. If this check succeeds, then it outputs a specification that can be further processed by {\CIRC}. In the end, the prover outputs either the bisimulation, if the expressions are equivalent, or a negative answer, otherwise.

We present two case studies in order to emphasize the high degree of generality for the types of systems we can handle, and show how the tool is used. 

\begin{example}
\label{eg:mealy}
We consider the case of Mealy machines, which are coalgebras
for the functor $(\itB \times \id)^A$. 

Formally, a Mealy machine is a pair $(S,\alpha)$ consisting of a set
$S$ of states and a transition function $\alpha\colon S\to (\itB \times
S)^A$, which for each state $s\in S$ and input $a\in A$ associates an
output value $b$ and a next state $s'$. Typically, we write
$\alpha(s)(a) = (b,s') \Leftrightarrow
\xymatrix{*+[o][F]{s}\ar[r]^{a|b} &*+[o][F]{s'}}$.

\medskip
In this example and in what follows we will consider for the output the two-value join-semilatice  $\itB = \{0,1\}$ (with $\bottom_\Bl = 0$) and for the input alphabet $A = \{a,b\}$. 
The expressions for Mealy machines are given by the grammar:
\[
\begin{array}{rl}
E_{\phantom{0}} &::\!= \emp \mid x \mid E \oplus E \mid \mu x . E_{2} \mid a(r<E>) \mid b(r<E>) \mid a(l<E_1>) \mid b(l<E_1>) \\
E_{1} &::\!= \emp \mid E_1 \oplus E_1 \mid 0 \mid 1\\
E_{2} &::\!= \emp \mid E_2 \oplus E_2 \mid \mu x . E_{2} \mid a(r<E>) \mid b(r<E>) \mid a(l<E_1>) \mid b(l<E_1) \\
\end{array}
\]

Intuitively, an expression of shape $a(l<E_1>)$ specifies a state that for an input $a$ has an output value specified by $E_1$. For example, the expression $a(l<1>)$ specifies a state that for input $a$ outputs $1$, whereas in the case of $a(l<\emp>)$ the output is $0$. An expression of shape $a(r<E>)$ specifies a state that for a certain input $a$ has a transition to a new state represented by $E$. For example, the expression $\mu x. a(r<x>)$ states that for input $a$, the machine will perform a ``$a$-loop" transition, whereas $a(r<\emp>)$ states that for input $a$ there is a transition to the state denoted by $\emp$. It is interesting to note that a state will only be fully specified in what concerns transitions and output (for a given input $a$ if both $a(l<E_1>)$ and $a(r<E>)$ appear in the expression (combined by $\oplus$). In the case only transition (resp. output) are specified, the underspecification is solved by setting the target state (resp. output) to $\emp$ (resp. $\bot_B = 0$). 
\end{example}

Next, to provide the reader with intuition, we will explain how one can reason on the bisimilarity of two simple expressions, by constructing bisimulation relations. Later on, we show how {\CIRC} can be used in conjunction with our tool in order to act as a decision procedure when checking equivalence of two expressions, in a fully automated manner.

%
%


We will start with the expressions $\eps_1 = \mu x . a(r<x>)$ and $\eps_2 = \emp$.We have to build a bisimulation relation  $\R$ on $\Gf$-expressions, such that  $(\eps_1, \eps_2) \in \R$. We do this in the following way: we start by taking $\R=\{(\E_1,\E_2)\}$ and we check whether this is already a bisimulation, by considering the output values and transitions and check whether no new expressions appear in this process. If new pairs of expressions appear we add them to $\R$ and repeat the process. Intuitively, this can be represented as follows:

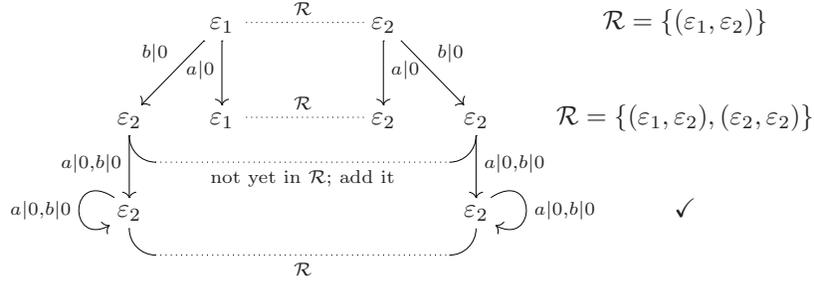
\begin{figure}[h]
\centering
$\xymatrix@C=0.7cm@R=0.7cm{
& {\eps_1}\ar[d]_{a|0}\ar@{..}[rr]^\R\ar[dl]_{b|0} & & {\eps_2}\ar[d]^{a|0}\ar[dr]^{b|0} & & \R = \{(\eps_1, \eps_2)\} \\
{\eps_2 }\ar[d]_{a|0,b|0} & {\eps_1}\ar@{..}[rr]^\R & & {\eps_2} & {\eps_2}\ar@{..} `d_l[llll] `[llll]^{\text{not yet in }\R \text{; add it}} [llll] \ar[d]^{a|0,b|0}& \R = \{(\eps_1, \eps_2), (\eps_2, \eps_2)\}\\
{\eps_2}\ar@(ul,dl)_{a|0,b|0}& & & & {\eps_2}\ar@(ur,dr)^{a|0,b|0} \ar@{..} `d_l[llll] `[llll]^\R [llll] & \checkmark
}$
\caption{Bisimulation construction}
\label{fig:mealy-a-noTrans-BC}
\end{figure}

In the figure above, and as before, we use the notation $\xymatrix@C=0.7cm@R=0.7cm{
 {\eps_1}\ar@{-}[r]^\R&\eps_2}$ to denote $(\eps_1,\eps_2)\in\R$.
As illustrated in Figure~\ref{fig:mealy-a-noTrans-BC}, $\R = \{(\eps_1, \eps_2), (\eps_2, \eps_2)\}$ is closed under transitions and is therefore a bisimulation. Hence, $\eps_1 \sim_\Gf \eps_2$.

The proved equality $\emp = \mu x . a(r< x >)$ might seem unexpected, if the reader is familiar with labelled transition systems. The equality is sound because these are expressions specifying behavior of a Mealy machine and, semantically, both denote the function that for every non-emtpy word outputs $0$ (the semantics of Mealy machines is given by functions $B^{A^+}$, intuitively one can think of these expressions as both denoting the empty language). This is visible if one draws the automata corresponding to both expressions (say, for simplicity, the alphabet is $A=\{a\}$):
\[
\xymatrix{\emp\ar@(d,l)^{a|0} &  \mu x . a(r< x >) \ar@(d,l)^{a|0}}
\]
Note that (i) the $\emp$ expression for Mealy machines is mapped with $\delta$ to a function that for input $a$ gives $<0, \emp>$, which represents a state with an $a$-loop to itself and output $0$; (ii) the second expression specifies explicitly an $a$-loop to itself and it also has output $0$, since no output value is explicitly defined.  
Now, also note that similar expressions for labelled transition systems (LTS), or coalgebras of the functor $\pow(-)^A$, would not be bisimilar since one would have an a-transition and the other one not. This is because the $\emp$ expression for LTS really denotes a deadlock state. In operational terms they would be converted to the systems
\[
\xymatrix{\emp &  \mu x . a( x ) \ar@(d,l)^a}
\] 
which now have an obvious difference in behavior. 

By performing a similar reasoning as in the example above one can show that the expressions $\eps_1 = \mu x . a(r<x>) \oplus b(r<x>)$ and $\eps_2 = \mu x . a(r<x>)$ are bisimilar, and the bisimulation relation is built as illustrated in Figure~\ref{fig:mealy-ab-trans-BC}:

\begin{figure}[H]
\centering
$\xymatrix@C=0.7cm@R=0.7cm{
& {\eps_1}\ar[d]_{a|0}\ar@{..}[rr]^\R\ar[dl]_{b|0} & & {\eps_2}\ar[d]^{a|0}\ar[dr]^{b|0} & & \R = \{(\eps_1, \eps_2)\} \\
{\eps_1 }\ar[d]_{a|0,b|0} & {\eps_1}\ar@{..}[rr]^\R & & {\eps_2} & {\emp}\ar@{..} `d_l[llll] `[llll]^{\text{not yet in }\R \text{; add it}} [llll] \ar[d]^{a|0,b|0}& \R = \{(\eps_1, \eps_2), (\eps_1, \emp)\}\\
{\eps_1}\ar@(ul,dl)_{a|0,b|0}& & & & {\emp}\ar@(ur,dr)^{a|0,b|0} \ar@{..} `d_l[llll] `[llll]^\R [llll] & \checkmark
}$
\caption{Bisimulation construction}
\label{fig:mealy-ab-trans-BC}
\end{figure}
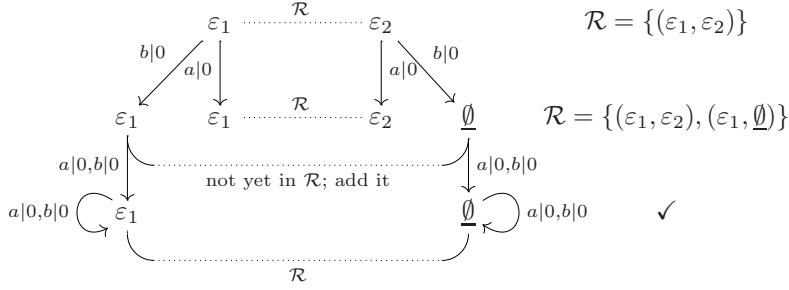

Let us further consider the Mealy machine depicted in Figure~\ref{fig:mealy1}, where all states are bisimilar.

\begin{figure}[H]
\centering
$\xymatrix@C=1.9cm@R=0.5cm{
*+[o][F]{s_1}\ar@/^/[r]_{a|0}\ar@/_/[d]^{b|1} &
*+[o][F]{\phantom{s_3}}\ar@(dr,ur)_{a|0}\ar@/^/[d]_{b|1} \\
*+[o][F]{\phantom{s_4}}\ar@(dr,ur)_{b|1}\ar@(dl,ul)^{a|0} &
*+[o][F]{s_2}\ar@(dr,ur)_{b|1}\ar@(dl,ul)^{a|0}
}$
\caption{Mealy machine: $s_1 \sim s_2$}
\label{fig:mealy1}
\end{figure}
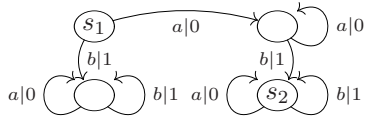

We show how to check the equivalence of two expression characterizing the states $s_1$ and $s_2$, in a fully automated manner, using {\CIRC}. These expressions are
\ans{C.1.30}
$\eps_1 = \mu x . b(l<1>) \oplus b(r<\eps_2>) \oplus a(\mu y . a(r<y>) \oplus b(r<\eps_2>) \oplus b(l<1>))$ and $\eps_2 = \mu x . b(l<1>) \oplus b(r<x>) \oplus a(r<x>)$, respectively.

In order to check bisimilarity of $\eps_1$ and $\eps_2$ we
load the tool and define the semilattice $\Bl = \{0,1\}$ and the alphabet
${\it A} = \{a, b\}$:
\\[0.5ex]
\code{
(jslt B is 0 1
  bottom 0 .
  0 v 0 = 0 .
  0 v 1 = 1 .
  1 v 1 = 1 .
endjslt)}\\
\code{(alph A is a b endalph)}

\smallskip

We provide the functor $\Gf$ using the command \code{(functor (B x Id)}\verb#^#\code{A .)}.
The command \code{(set goal ... .)} specifies the goal we want to prove:
\begin{alltt}
\fontsize{9}{10}
\selectfont(set goal
 \verb#\#mu X:FixpVar . b(l<1>) (+) a(l<0>) (+) b(r<X:FixpVar>) (+)
                 a(r<X:FixpVar>) =
 \verb#\#mu X:FixpVar . b(l<1>) (+) b(<\verb#\#mu X:FixpVar . b(l<1>) (+)
                 b(r<X:FixpVar>) (+) a(r<X:FixpVar>)>) (+)
                 a(\verb#\#mu Y:FixpVar . a(r<Y:FixpVar>) (+)
                 b(<\verb#\#mu X:FixpVar . b(l<1>) (+) a(l<0>) (+)
                 b(r<X:FixpVar>) (+) a(r<X:FixpVar>)>) (+) b(l<1>)) .)
\end{alltt}

\smallskip
In order to generate the {\CIRC} specification
we use the command \code{(generate coalgebra .)}.
Next we need to load {\CIRC} along with the resulting specification
and start the proof engine using the command
\code{(coinduction .)}.

As already shown, behind the scenes, {\CIRC} builds a bisimulation relation
that includes the initial goal.
The proof succeeds and the output consists of (a subset of) this bisimulation:
\begin{alltt}
\fontsize{9}{10}\selectfont{}Proof succeeded.
  Number of derived goals: 2
  Number of proving steps performed: 50
  Maximum number of proving steps is set to: 256

Proved properties:
- phi (+) (\verb#\#mu X . a(l<0>) (+) a(r<X>) (+) b(l<1>) (+) b(r<X>)) =
  phi (+) (\verb#\#mu Y . a(r<Y>) (+) b(l<1>) (+)
  b(r<\verb#\#mu X . a(l<0>) (+) a(r<X>) (+) b(l<1>)(+)b(r<X>)>))

- \verb#\#mu X . a(l<0>) (+) a(r<X>) (+) b(l<1>) (+) b(r<X>) =
  \verb#\#mu Z . a(r<\verb#\#mu Y . a(r<Y>) (+) b(l<1>) (+)
          b(r<\verb#\#mu X . a(l<0>) (+) a(r<X>) (+) b(l<1>) (+) b(r<X>)>)>) (+)
          b(l<1>) (+) b(r<\verb#\#mu X . a(l<0>) (+) a(r<X>) (+)
          b(l<1>) (+) b(r<X>)>)
 \end{alltt}

For the ease of understanding, here we printed a readable version of the proved properties. In Section~\ref{sec:code}, however, we show that internally each expression is brought to a canonical form by renaming the variables.
Moreover, note that in our tool, $\emp$ is represented by the constant $\code{phi}$. All the examples provided in the current section make use of this convention.

As previously mentioned, {\CIRC} is also able to detect when two
expressions are not equivalent. Take, for instance, the expressions
$\mu x . a(l<0>) \oplus a(r<a(l<1>) \oplus a(r<x>)>)$ and $a(l<0>) \oplus a(r<a(r< \mu x . a(r<x>) \oplus a(l<0>)>) \oplus a(l<1>)>)$, characterizing the states $s_1$ and $s_3$ from the Mealy machines in Fig.~\ref{fig:mealy2}. After following some steps similar to the ones previously enumerated, the proof fails and the output message is
\code{Visible goal [...] failed during coinduction}.

\begin{figure}[H]
\centering
$\xymatrix@C=1cm{
*+[o][F]{s_1}\ar@/^1pc/[r]^{a|0} &
*+[o][F]{s_2}\ar@/^1pc/[l]_{a|1} &
*+[o][F]{s_3}\ar@/^/[r]^{a|0} &
*+[o][F]{s_4}\ar@/^/[r]^{a|1} &
*+[o][F]{s_5}\ar@(dr,ur)_{a|0} & \\
}$
\caption{Mealy machines: $s_1 \not\sim s_3$}
\label{fig:mealy2}
\end{figure}
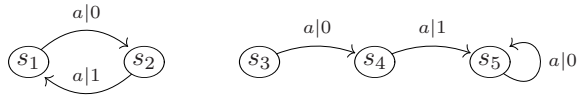

\begin{example}
\label{eg:ccs}


Let us show how one may check strong bisimilarity of
two nondeterministic processes of a non-trivial CCS-like language
with termination, deadlock, and divergence, as studied in \cite{Aceto:1992:TDD:147508.147527}. A process is a guarded, closed term defined by the following grammar:

\newcommand{\success}{\checkmark}
\newcommand{\empsum}{\textnormal{$\Omega$}}

\begin{eqnarray}\label{eq:ccs-gram}
P &::\!=& \success \mid \delta \mid \empsum \mid  a.P \mid P + P \mid x \mid \mu x . P
\end{eqnarray}

where:

\begin{itemize}
\item $\success$ is the constant for successful termination,
\item $\delta$ denotes deadlock,
\item $\empsum$ is the divergent computation (\emph{i.e.}, the undefined process),
\item $a.P$\, is the process executing the action $a$ and then continuing as the process $P$, for any action $a$ from a given set $A$,
\item $P_1 + P_2$ is the non-deterministic process behaving as either $P_1$ or $P_2$, and
\item $\mu x . P$ is the recursive process $P[\mu x . P / x]$.
\end{itemize}

In~\cite{brs_lmcs} is is shown that, up to strong bisimilarity, the above syntax of processes is equivalent to the canonical set of (guarded, closed) regular expressions derived for the functor $1 \myplus {\cal P}_{\omega}(\id)^A$,

\newcommand{\unu}{1}

\[
\begin{array}{lcl}
E   & ::= & \emptyset \mid E \oplus E \mid x \mid \mu x.E \mid l[E_1] \mid r[E_2]\\
E_1 & ::= & \emptyset \mid  E_1 \oplus E_1 \mid \unu \\
E_2 & ::= & \emptyset \mid  E_2 \oplus E_2 \mid a(E_3)\\
E_3 & ::= & \emptyset \mid  E_3 \oplus E_3 \mid \{ E\}
\end{array}
\]

The translation map $(-)^\dagger$ from processes to expressions is defined by
induction on the structure of the process:
\[
\begin{array}{lcl@{\hspace{1.3cm}}lcl}
(\success)^\dagger &=& l[\unu] & (a.P)^\dagger &=& r[a(\{P^\dagger\})]\\
(\delta)^\dagger &=& r[\emptyset] & (P_1+P_2)^\dagger &=&  (P_1)^\dagger \oplus (P_2)^\dagger\\
(\empsum)^\dagger &=& \emptyset & (\mu x. P)^\dagger &=& \mu x. P^\dagger\\
x^\dagger  &=& x \,.
\end{array}
\]

Consider now two processes $P$ and $Q$ over the alphabet $A = \{ a, b \}$:
\[
\begin{array}{lcl}
P & = & \mu x. (a.x + a.P_1 + b.b.\success + b.(\delta + \empsum))\\
Q & = & \mu z. (a.z + b.(\delta + b.\success) + b.\delta)
\end{array}
\]
where $P_1 = \mu y. (a.(y+\delta)+b.\delta+b.(\delta+b.\success)+\delta)$.
Graphically, the two processes can be represented by the following labelled transition systems (for simplicity we omit annotating states with information regarding the satisfiability of successful termination, divergence, and deadlock):

\begin{figure}[H]
\centering
$\xymatrix@C=.7cm@R=.7cm{
&
*+[o][F]{P}\ar@(dl,ul)_{a}\ar[r]^{b}\ar[d]_{a}\ar[dr]^{b} &
*+[o][F]{\phantom{A} } &
*+[o][F]{Q}\ar@(dr,ur)^{a}\ar[d]_{b}\ar[dr]^{b}\\
*+[o][F]{\phantom{A}} &
*+[o][F]{P_1}\ar@(dr,ur)^{a}\ar[l]_{b}\ar[dl]_{b} &
*+[o][F]{\phantom{A}}\ar[d]^{b} &
*+[o][F]{\phantom{A}}\ar[d]_{b} &
*+[o][F]{\phantom{A}}  \\
*+[o][F]{\phantom{A}}\ar[r]^{b} &
*+[o][F]{\phantom{A}} &
*+[o][F]{\phantom{A}} &
*+[o][F]{\phantom{A}} \\
}$
\caption{Nondeterministic processes: $Q \sim P$}
\label{fig:pa}
\end{figure}
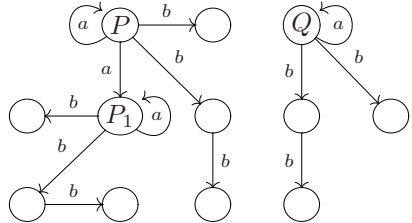

We want to check if the process $P$ is strongly bisimilar to the process $Q$. By using the above translation, process $P$ is represented by the expression
\[
\begin{array}{l@{}l@{}l}
\mu x .
(&r[a(\{\mu y. (&r[a(\{ y \oplus r[\emptyset]\})] \oplus
                 r[b(\{r[\emptyset]\})] \oplus\\
 &              &r[b(\{r[\emptyset] \oplus r[b(\{l[\unu]\})] \})]\oplus
                r[\emptyset]
               )
   \})] \oplus \\
 & \multicolumn{2}{l}{ r[a(\{x\})] \oplus
   r[b(\{r[b(\{ l[\unu]\})] \})] \oplus
   r[b(\{r[\emptyset] \oplus \emptyset\})]
)}\\[1ex]
\end{array}
\]
whereas process $Q$ is represented by the expression
\[
\begin{array}{l@{}l@{}l}
\mu z .
 (&r[a(\{z\})] \oplus
        r[b(\{r[\emptyset] \oplus r[b(\{l[\unu]\})] \})] \oplus
        r[b(\{r[\emptyset]\})]
       ).
\end{array}
\]


In order to use the tool, one needs to specify the semilattice, 
the alphabet, the functor, and the goal in a manner similar to the one previously presented:
\\[0.5ex]
\code{
(jslt B is 1
  bottom 1 .
  1 v 1 = 1 .
endjslt)}\\
\code{(alph A is a b endalph)}\\
\code{(functor B + (P Id)}\verb#^#\code{A .)}
\begin{alltt}
\fontsize{9}{10}\selectfont(set goal \verb#\#mu X:FixpVar .
          r[ a( \{ X:FixpVar \} ) ] (+)
          r[ a( \{ \verb#\#mu Y:FixpVar .
                  r[ a( \{ Y:FixpVar (+) r[ phi ] \} ) ] (+)
                  r[ b( \{ r[ phi ] \} ) ] (+)
                  r[ b( \{ r[ phi ] (+) r[ b( \{ l[ 1 ] \} ) ] \} ) ] (+)
                  r[ phi ]
              \} )
          ] (+)
          r[ b( \{ r[ b( \{ l[ 1 ] \} ) ] \} ) ] (+)
          r[ b( \{ r[ phi ] (+) phi \} ) ]
          =
          \verb#\#mu Z:FixpVar .
          r[ a( \{ Z:FixpVar \} ) ] (+)
          r[ b( \{ r[ phi ] (+) r[ b( \{ l[ 1 ] \} ) ] \} ) ] (+)
          r[ b( \{ r[ phi ] \} ) ]  .)
\end{alltt}

For the generated specification {\CIRC} terminates and outputs a positive result:

\begin{alltt}
\fontsize{9}{10}\selectfont{}Proof succeeded.
  Number of derived goals: 15
  Number of proving steps performed: 58
  Maximum number of proving steps is set to: 256

Proved properties:
- r[phi] (+) (\verb#\#mu Y. r[phi] (+) r[a(\{r[phi] (+) Y\})] (+) r[b(\{r[phi]\})]
  (+) r[b(\{r[phi] (+) r[b(\{l[1]\})]\})])
  =
  \verb#\#mu Z. r[a(\{Z\})] (+) r[b(\{r[phi]\})] (+) r[b(\{r[phi] (+) r[b(\{l[1]\})]\})]
- r[b(\{l[1]\})] = r[phi] (+) r[b(\{l[1]\})]
- \verb#\#mu Y. r[phi] (+) r[a(\{r[phi] (+) Y\})] (+) r[b(\{r[phi]\})] (+)
  r[b(\{r[phi] (+) r[b(\{l[1]\})]\})]
  =
  \verb#\#mu Z. r[a(\{Z\})] (+) r[b(\{r[phi]\})] (+) r[b(\{r[phi] (+) r[b(\{l[1]\})]\})]
- \verb#\#mu X. r[a(\{X\})] (+) r[a(\{\verb#\#mu Y. r[phi] (+) r[a(\{r[phi] (+) Y\})] (+)
  r[b(\{r[phi]\})] (+) r[b(\{r[phi] (+) r[b(\{l[1]\})]\})]\})] (+)
  r[b(\{r[phi] + phi\})] (+) r[b(\{r[b(\{l[1]\})]\})]
  =
  \verb#\#mu Z. r[a(\{Z\})] (+) r[b(\{r[phi]\})] (+) r[b(\{r[phi] (+) r[b(\{l[1]\})]\})]
\end{alltt}

\end{example}

\subsection{Implementation}
\label{sec:code}

In this section we present details on the implementation of the algebraic specification given in Section~\ref{sec:algSpec}, based on the examples from Section~\ref{sec:caseStudy}.

In order to generate the algebraic specifications for {\CIRC} when provided a functor and two expressions we used the Maude system \cite{DBLP:conf/maude/2007}. We choose it for its suitability for performing equational and rewriting logic based computations, and because of its reflective properties allowing for the development of advanced metalanguage applications. As the technical aspects on how to work at the meta-level are beyond the scope of this paper, we refrain from presenting them and show, instead, what the generated specifications consist of.

Most of the algebraic specifications from Section~\ref{sec:algSpec} have a straightforward implementation in Maude. Consider, for instance, the case of Mealy machines presented in Example~\ref{eg:mealy}. The generated grammars for functors (\ref{eq:fun-gram}) and expressions (Definition~\ref{def:expr}) are coded as:
\begin{alltt}
\fontsize{9}{10}\selectfont{}sort Functor .                          sorts Exp ExpStruct Alph Slt .
sorts AlphName SltName .                subsort Exp < ExpStruct .
subsort SltName < Functor .             enum A is a b . enum B is 0 1 .
                                        subsort A < Alph .
op A : -> AlphName .                    subsort B < Slt .
op B : -> SltName .                     
op G : -> Functor .                     op _`(+`)_ : Exp Exp -> Exp .
op Id : -> Functor .                    op _`(_`) : Alph Exp -> Exp .
op _+_ : Functor Functor -> Functor .   op \verb#\#mu_._ : FixpVar Exp -> Exp .
op _^_ : Functor AlphName -> Functor .  ops l<_> r<_> : Exp -> Exp .
op _x_ : Functor Functor -> Functor .   op phi : -> Exp .

                          eq G = (B x Id) ^ A .
\end{alltt}

Most of the syntactical constructs are Maude-specific: \code{sorts} and \code{subsort} declare the sorts we work with and, respectively, the relations between them; \code{op} declares operators; \code{eq} declares equations (the equation in our case defines the shape of the functor \code{G}). The only {\CIRC}-specific construct, \code{enum}, is syntactic sugar for declaring enumerable sorts, \emph{i.e.}, sorts that consist only of the specified constants. As a side note, if brackets (\code{(}, \code{[}, \code{\{})  are used in the declaration of an operation, then they must be preceded by a backquote (\code{`}).

As mentioned in Section~\ref{sec:prelim}, in order to guarantee the finiteness of our procedure, one needs to include the ACI axioms for \code{(+)}. \ans{C.1.5} Moreover, we have observed that the unity axiom for \code{(+)} plays an important role in decreasing the number of states generated by the repeated application of $\delta_{\Gf}$, therefore improving the overall time performance of the tool.
\ans{C.2.23} For example, the number of rewritings {\CIRC} performed in order to prove the bisimilarity of $\E_1$ and $\E_2$ in Figure~\ref{fig:mealy-ab-trans-BC} was halved when the unity axiom was used.

By turning on the axiomatization flag using the command \code{(axioms on .)}, the following code is generated:

\begin{alltt}
\fontsize{9}{10}\selectfont{}op _`(+`)_ : Exp Exp -> Exp [assoc comm] .
eq E:Exp (+) E:Exp = E:Exp .
eq E:Exp (+) phi = E:Exp .
\end{alltt}

It is an obvious question why not to add other axioms to the tool, since the unity axiom has improved performance. At this stage we do not have studied in detail how much adding other axioms would help. It is in any case a trade-off on how many extra axioms one should include, which will get the automaton produced from an expression closer to the minimal automaton, and how much time the tool will take to reduce the expressions in each step modulo the axioms. For classical regular expressions, there is an interesting empirical study on this~\cite{derivatives-jfp09}. We leave it as future work to carry on a similar study for our expressions and axioms.

The process of substituting fixed-point variables has a natural implementation. We present the equations handling the basic expressions $\emp$ and $x$, and the operation \code{(+)}:
\begin{alltt}
\fontsize{9}{10}\selectfont{}op _`[_/_`] : Exp Exp FixpVar -> Exp .
eq phi [ E:Exp / X:FixpVar ] = phi .
ceq Y:FixpVar [ E:Exp / X:FixpVar ] = E:Exp if (X:FixpVar == Y:FixpVar) .
eq Y:FixpVar [ E:Exp / X:FixpVar ] = Y:FixpVar [owise] .
eq (E1:Exp (+) E2:Exp) [ E:Exp / X:FixpVar ] = 
   (E1:Exp [E:Exp / X:FixpVar]) (+) (E2:Exp [E:Exp / X:FixpVar]) .
\end{alltt}
  
In order to avoid matching problems and to overpass the fact that in Maude one cannot handle an equation that has fresh variables in its right-hand-side (\emph{i.e.}, they do not appear in the left-hand-side), we replace expression variables with parameterized constants: \code{op var : Nat -> FixpVar .} The operation that obtains this canonical form has an inductive definition on the structure of the given expression and makes use of the substitution operation presented above. For this reason, the bisimulation {\CIRC} builds contains parameterized constants instead of the user declared variables. The property proved in Example~\ref{eg:ccs} is, therefore, written as:

\begin{alltt}
\fontsize{9}{10}\selectfont{}\verb#\#mu var(2) . r[a(\{var(2)\})] (+) r[a(\{\verb#\#mu var(1) . r[phi] (+)
r[a(\{r[phi] (+) var(1)\})] (+) r[b(\{r[phi]\})] (+) r[b(\{r[phi] (+)
r[b(\{l[1]\})]\})]\})] (+) r[b(\{r[phi] (+) phi\})] (+) r[b(\{r[b(\{l[1]\})]\})]
=
\verb#\#mu var(1) . r[a(\{var(1)\})] (+) r[b(\{r[phi]\})] (+)
r[b(\{r[phi] (+) r[b(\{l[1]\})]\})]
\end{alltt}

The most important part of the algebraic specification consists of the equations defining the operations $\it \delta\_(\_)$, $\it Plus\_(\_,\_)$, and $\it Empty$. Most of these equations are implemented as presented in \cite{brs_lmcs}. The only difficulties we encountered were for the exponentiation case, as Maude does not handle higher-order functions. Without entering into details, as a workaround, we introduced a new sort \code{Function < ExpStruct} and an operation \code{{\bf \textbackslash}. : ExpoCase Alph Functor ExpStruct -> Function} in order to emulate function-passing. The first argument is used to memorize the origin where the exponentiation ingredient is encountered: $\delta$, $\it Plus$, or $\it Empty$. Its purpose is purely technical -- we use it in order to avoid some internal matching problems. The other three parameters are those of the structured expression $\lambda {.}(a,\FtrlG,\sigma)$ presented in Section~\ref{sec:algSpec}: a letter in the alphabet, an ingredient, and some other structured expression.

Another thing worth describing is the way we enable {\CIRC} to prove equivalences when the powerset functor occurs. Namely, we present how interpolant (\ref{srl:pow}) is implemented. Recall that we want to show that two sets of expressions are equivalent, which means that for each expression in the first set there must be an equivalent one in the second set and vice-versa.

\newcommand{\largeusq}{{\fontsize{12}{13}\selectfont\_}}
In order to handle sets of structured expressions we introduce a new sort, \code{ExpStructSet} as a supersort for \code{ExpStruct}. We also consider the set separator \code{\largeusq,\largeusq \,\,: ExpStructSet ExpStructSet -> ExpStructSet [assoc,comm]}, the empty set \code{emptyS : -> ExpStructSet}, and the set wrapping operation \code{\{\largeusq\} : ExpStructSet -> ExpStruct}. In order to mimic universal quantification over a set, we use a special constant referred to as token ``\code{[/]}''. In what follows, we consider two variables of sort \code{ExpStructSet}: \code{ES} and \code{ES'}, and two variables of sort \code{ExpStructSet}: \code{ESS} and \code{ESS'}. We now describe the process of finding the equivalence between two sets:
\begin{itemize}
\item whenever encountering two wrapped expression sets we add the universal quantification token to each of them in two distinct goals:
\begin{alltt}
\fontsize{9}{10}\selectfont srl \{ESS\} = \{ESS'\} => \{[/] ESS\} = \{ESS'\} /\verb#\# \{ESS\} = \{[/] ESS'\} .
\end{alltt}
\item iterate through the expressions on the left-hand-side (similarly for the other direction):
\begin{alltt}
\fontsize{9}{10}\selectfont srl \{[/] (ES , ESS)\} = \{ESS'\} =>
     \{[/] ES\} = \{ESS'\} /\verb#\# \{[/] ESS\} = \{ESS'\} .
 srl \{ESS\} = \{[/] (ES' , ESS')\} =>
     \{ESS\} = \{[/] ES'\} /\verb#\# \{ESS\} = \{[/] ESS'\} .
\end{alltt}
\item when left with one expression on the left-hand-side, start iterating through the expressions on the right-hand-side until finding an equivalence (similarly for the other direction):
\begin{alltt}
\fontsize{9}{10}\selectfont srl \{[/] ES\} = \{ES' , ESS'\} => ES = ES' \verb#\#/ \{[/] ES\} = \{ESS'\} .
 srl \{ES , ESS\} = \{[/] ES'\}  => ES = ES' \verb#\#/ \{ESS\} = \{[/] ES'\} .
\end{alltt}
\item if no equivalence has been found, transform the current goal into a visible failure:
\begin{alltt}
\fontsize{9}{10}\selectfont srl \{ESS\} = emptyS => true = false .
 srl emptyS = \{ESS\} => true = false .
\end{alltt}

\end{itemize}

Finally, the type checker for structured expressions has a straightforward implementation. Its code does not appear in the generated specification as it is only used when the tool receives the expressions as input. This prevents obtaining the specification and starting the prover in case invalid expressions are provided.

\section{Discussion}
\label{sec:concl}
One of the major contributions of this paper is that we provided a decision procedure for the bisimilarity of generalized regular expressions. In order to enable the implementation of the decision procedure, we have exploited an encoding of coalgebra into algebra, and we formalized the equivalence between the coalgebraic concepts associated to non-deterministic coalgebras \cite{brs_lmcs} and their algebraic correspondents. This led to the definition of algebraic specifications (\algspec) that model both the language and the coalgebraic structure of expressions.
Moreover, we defined an equational deduction relation ($\vdashInd$), used on the algebraic side for reasoning on the bisimilarity of expressions.

The most important result of the parallel between the coalgebraic and algebraic approaches is given in Corollary~\ref{cor:ii}, which formalizes the definition of the bisimulation relations in algebraic terms. Actually, this result is the key for proving the soundness of the decision procedure implemented in the automated prover {\CIRC} \cite{lucanu-etal-2009-calco}. As a coinductive prover, {\CIRC} builds a relation $\cal F$ closed under the application of $\delta_\itG$ with respect to $\vdashInd$ ($\algspec \cup \efr{\cal F} \vdashInd \efr{\delta_\itG({\cal F})}$), hence automatically computing a bisimulation the initial proof obligations belong to.

The approach we present in this paper enables {\CIRC} to perform
reasoning based on bisimulations (instead of experiments
\cite{rosu-lucanu-2009-calco}). This way, the prover is extended to
checking bisimilarity in a large class of systems that can be
modeled as non-deterministic coalgebras. Note that the constructions above are all automated -- the (non-trivial) {\CIRC} algebraic specification describing $\algspec$, together with the interpolants implementing $\vdashInd$ are generated with the Maude tool presented in Section~\ref{sec:caseStudy}.

We now mention some of the existing coalgebraic based tools for proving bisimilarity and the main differences with the tool presented in this paper.  CoCasl~\cite{DBLP:conf/fase/HausmannMS05} and CCSL~\cite{ccsl} are tools that can generate proof obligations for 
theorem provers from coalgebraic specifications.  In~\cite{DBLP:conf/fase/HausmannMS05} several tactics for interactive and 
automatic bisimulation building are implemented in Isabelle/HOL and are used 
to derive bisimilarities for translated specifications from CoCasl. The main difference between our tool and CoCasl or CCSL is that, given a functor, the tool derives a  specification language for which equivalence is decidable (that is, it is automatic and not interactive). CIRC~\cite{goguen-lin-rosu-2000-ase,rosu-lucanu-2009-calco}, on top of which the current tool is built, is based on hidden logic~\cite{rosu-thesis} and uses a partial decision procedure for proving bisimilarities via implicit construction of bisimulations. Our tool can be seen as an extension of CIRC to a fully automatic theorem prover for the class of non-deterministic coalgebras. We stress the fact that the focus of this paper was on a language for which equivalence is decidable. Tools such as CoCasl, CCSL or {\CIRC} have a more expressive language, where one can, for instance, specify streams which in our language could not be specified (intuitively, the streams we can specify in our language are eventually periodic). In those tools decidability of equivalence can however not be guaranteed.

There are several directions for future work. 

Extending the class of systems to include quantitative coalgebras (such as weighted automata and Markov chains) will  enlarge the scope of applicability of the tool. The challenge in this extension arises from the fact that the definition of expressions for quantitative coalgebras involving the distribution monad is not as modular as for the other functors (for details see~\cite{bbrs_ic}). This is a consequence of the fact that the sum of two valid expressions might not be a valid expression anymore (since in distributions we require that the sum of probabilities add up to $1$).  Moreover, calculating bisimulation relations in the quantitative setting will encompass metric manipulation, which is currently not implemented in \CIRC.

To improve usability, building a graphical interface for the tool is an obvious next step. The graphical interface should ideally allow the specification of expressions by means of  systems of equations (which are then solved internally) or even by means of an automaton, which would then be translated to an expression using Kleene's theorem. We also would like to explore how adding more axioms than ACI to the prover (that is, each step of the bisimulation checking is performed modulo more equations) improves the performance. Our experience so far shows that by adding the axiom for the distribution of the $\emp$ expression through the constructors, \textit{i.e.} $\emp \oplus \eps = \eps$, the prover works significantly faster.

We have not yet studied complexity bounds for the algorithms presented in this paper. We conjecture however that the bounds will be very similar to the already known for classical regular expressions~\cite{K08a,worthington}. Further explorations in this direction are left as future work.

\paragraph{Acknowledgments} We would like to thank the referees for
the many constructive comments, which greatly helped us to improve the
paper. The authors are also grateful for useful comments from Luca Aceto, Filippo Bonchi, and Miguel Palomino Tarjuelo. The work of Georgiana Caltais and Eugen-Ioan Goriac has been partially supported by the project `Meta-theory of Algebraic Process Theories' (nr.~100014021) of the Icelandic Research Fund. The work of Eugen-Ioan Goriac has also been partially supported by the project `Extending and Axiomatizing Structural Operational Semantics: Theory and Tools' (nr. 110294-0061) of the Icelandic Research Fund.
The work of Dorel Lucanu has been partially supported by the PNII grant DAK project Contract 161/15.06.2010, SMIS-CSNR 602-12516. The work of Alexandra Silva was partially funded by ERDF - European
Regional Development Fund through the COMPETE Programme and by
Funda‹o para a Cincia e a Tecnologia, Portugal within projects {\tt
FCOMP-01-0124-FEDER-020537} and {\tt SFRH/BPD/71956/2010}.

\bigskip 
\noindent\textbf{References}

\end{document}